 \def\epsy{\varepsilon}
\def\liminf{\mathop{\rm lim{\,}inf}}
\def\argmin{\mathop{\rm arg{\,}min}}
\def\argmax{\mathop{\rm arg{\,}max}}
\def\sq{\hbox{\rlap{$\sqcap$}$\sqcup$}}
\def\qed{\ifmmode\sq\else{\unskip\nobreak\hfil
\penalty50\hskip1em\null\nobreak\hfil\sq
\parfillskip=0pt\finalhyphendemerits=0\endgraf}\fi\medskip}
\def\atop#1#2{\genfrac{}{}{0pt}{}{#1}{#2}}
 \def\FRAC#1#2#3{\genfrac{}{}{}{#1}{#2}{#3}}
\def\ddtp{{\mathchoice{\FRAC{1}{d^{\hbox to 2pt{\rm\tiny +\hss}}}{dt}}%
{\FRAC{1}{d^{\hbox to 2pt{\rm\tiny +\hss}}}{dt}}%
{\FRAC{3}{d^{\hbox to 2pt{\rm\tiny +\hss}}}{dt}}%
{\FRAC{3}{d^{\hbox to 2pt{\rm\tiny +\hss}}}{dt}}}}
\def\half{{\mathchoice{\FRAC{1}{1}{2}}%
{\FRAC{1}{1}{2}}%
{\FRAC{3}{1}{2}}%
{\FRAC{3}{1}{2}}}}
\def\eqdef{\mathbin{:=}}
\newtheorem{theorem}{Theorem}[section]
\newtheorem{proposition}[theorem]{Proposition}
\newtheorem{lemma}[theorem]{Lemma}
\def\Lemma#1{Lemma~\ref{t:#1}}
\def\Proposition#1{Proposition~\ref{t:#1}}
\def\Theorem#1{Theorem~\ref{t:#1}}
\def\Section#1{Section~\ref{#1}}
\def\Figure#1{Figure~\ref{f:#1}}
\def\eq#1/{(\ref{e:#1})}
\newcommand{\field}[1]{\mathbb{#1}}
\def\ind{\field{I}}
\def\Re{\field{R}}
\def\zstate{{\sf Z}}
\def\pz{\clP(\zstate)}
\def\transpose{{\hbox{\it\tiny T}}}
\newcounter{rmnum}
\newenvironment{romannum}{\begin{list}{{\upshape (\roman{rmnum})}}{\usecounter{rmnum}
\setlength{\leftmargin}{10pt} \setlength{\rightmargin}{8pt}
\setlength{\itemindent}{-1pt} }}{\end{list}}
\newcounter{anum}
\def\til={{\widetilde =}}
\def\clF{{\cal F}}
\def\clP{{\cal P}}
\newlength{\noteWidth}
\long\def\notes#1{\ifinner
             {\tiny #1}
             \else
             \marginpar{\parbox[t]{\noteWidth}{\raggedright\tiny #1}}
             \fi}
\def\notes#1{}
\def\st{\hbox{\rm subject to\ }}
 \def\bfphi{\mbox{\protect\boldmath$\phi$}}
\def\Var{\hbox{\sf Var}\,}
\def\Rank{\hbox{\sf rank}\,}
\long\def\notes#1{\ifinner
             {\footnotesize #1}
             \else
              \marginpar{\parbox[t]{\noteWidth}{\raggedright\footnotesize #1}}
               \fi}
\def\Var{\hbox{\sf Var}\,}
\def\DMM{D^{\hbox{\rm\tiny MM}}}
\def\EXPF{\mathcal{E}}
\def\VCdim{\mathit{VC}}
\def\setofhalfspace{\mathbf{H}}
\def\setB{B}
\def\Proj{\mathcal{P}}
\title{Feature Extraction for Universal Hypothesis Testing
\\
via Rank-Constrained Optimization}
\author{Dayu Huang and Sean Meyn\\
\authorblockA{Dept.\ of ECE and  CSL, UIUC, Urbana, IL 61801, U.S.A.\\dhuang8, meyn`{\tt at}'illinois.edu}
}
\begin{document}
\urlstyle{same}
\maketitle




\begin{abstract}

This paper concerns the construction of tests for universal hypothesis testing problems, in which the alternate hypothesis is poorly modeled and the observation space is large. The mismatched universal test is a feature-based technique for this purpose. In prior work it is shown that its finite-observation performance can be much better than the (optimal) Hoeffding test, and good performance depends crucially on the choice of features. The contributions of this paper include:
\begin{romannum}
\item   We obtain bounds on the number of \emph{$\epsy$-distinguishable distributions} in an exponential family.  
\item
 This motivates a new framework for feature extraction, cast as a rank-constrained optimization problem.  
\item
We obtain a gradient-based algorithm to solve the rank-constrained optimization problem and prove its local convergence. \end{romannum}

\bigbreak

\noindent
\textbf{Keywords:}
Universal test, mismatched universal test, hypothesis testing, feature extraction, exponential family

\noindent

\end{abstract}

\thispagestyle{empty}


\section{Introduction}
\label{intro}

\subsection{Universal Hypothesis Testing}

In universal hypothesis testing, the problem is to design a test to decide in favor of either of two hypothesis $H0$ and $H1$, under the assumption that we know the probability distribution $\pi^0$ under $H0$, but have uncertainties about the probability distribution $\pi^1$ under $H1$. One of the applications that motivates this paper is detecting abnormal behaviors \cite{den87p222}: In the applications envisioned, the amount of data from abnormal behavior is limited, while there is a relatively large amount of data for normal behavior. 


To be more specific, we consider the hypothesis testing problem in which a sequence of observations $Z_1^n:=(Z_1,\dots,Z_n)$ from a finite observation space $\zstate$ is given, where $n$ is the number of samples. The sequence $Z_1^n$ is assumed to be i.i.d. with marginal distribution $\pi^i \in \pz$ under hypothesis $Hi$ ($i=0,1$), where $\pz$ is the probability simplex on $\zstate$.

Hoeffding~\cite{hoe65p369} introduced a universal test, defined using the empirical distributions and the Kullback-Leibler divergence. The empirical distributions $\{\Gamma^n: n \geq 1 \}$ are defined as elements of $\pz$ via,
\[
\Gamma^n(A) = \frac{1}{n} \sum_{k =1}^n \ind  \{Z_k \in A \}, \qquad A \subset \zstate.
\]
The Kullback-Leibler divergence for two probability distributions $\mu^1, \mu^0 \in \pz$ is defined as,
\begin{equation}
D(\mu^1 \| \mu^0) = \langle \mu^1 , \log(\mu^1/\mu^0) \rangle.\nonumber
\end{equation} 
where the notation $\langle \mu, f \rangle$ denotes expectation of $f$ under the distribution $\mu$, i.e., $\langle \mu, f\rangle=\sum_{z}\mu(z)f(z)$. The Hoeffding test is the binary sequence,
\begin{equation}
\phi^{\sf{H}}_n =\ind\{D(\Gamma^n\|\pi^0) \geq \eta\},\nonumber
\end{equation}
where $\eta$ is a nonnegative constant. The test decides in favor of $H1$ when $\phi^{\sf{H}}=1$.

It was demonstrated in \cite{unnhuameysurvee09} that the performance of the Hoeffding test is characterized by both its error exponent and the variance of the test statistics. We summarize this in \Theorem{performhoe}. The error exponent is defined for a test sequence  $\bfphi\eqdef \{\phi_1,\phi_2,\dots\}$ adapted to $Z_1^n$ as
\begin{equation}
\begin{aligned}
J_{\phi}^0 &\eqdef \liminf_{n \to \infty} -\frac{1}{n}\log({\pi^0}\{
\phi_n=1\} ),\\
J_{\phi}^1 &\eqdef \liminf_{n \to \infty} -\frac{1}{n}\log({\pi^1}\{
\phi_n=0\} ).
\end{aligned}\nonumber
\end{equation}

\begin{theorem}\label{t:performhoe}
\begin{enumerate}
\item The Hoeffding test achieves the optimal error exponent $J_{\phi}^1$ among all tests satisfying a given constant bound $\eta \ge 0$ on the exponent $J_{\phi}^0$, i.e., $J_{\phi^{\sf{H}}}^0\geq \eta$ and 
\begin{equation}
J_{\phi^{\sf{H}}}^1=   \sup \{   J_{\phi}^1   :
            \ \mbox{\it subject to} \  J_{\phi}^0 \ge \eta\}, \nonumber
\end{equation} 
\item The asymptotic variance of the Hoeffding test depends on the size of the observation space. When $Z_1^n$ has marginal $\pi^0$, we have
\[
\lim_{n \to \infty} \Var [n D(\Gamma^n \| \pi^0)] =\half(|\zstate| -1).
\] 
\end{enumerate}
\end{theorem}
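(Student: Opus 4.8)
For part (1) the plan is to invoke Sanov's theorem for the empirical distributions $\{\Gamma^n\}$. Under $\pi^0$ the sequence $\{\Gamma^n\}$ satisfies a large deviations principle on $\pz$ with rate function $D(\,\cdot\,\|\pi^0)$. Since the rejection region $\{\mu : D(\mu\|\pi^0)\ge\eta\}$ is closed with $\inf\{D(\mu\|\pi^0) : D(\mu\|\pi^0)\ge\eta\}=\eta$, the matching LDP upper and lower bounds give $J^0_{\phi^{\sf H}}=\eta$, and the same LDP applied under $\pi^1$ to the acceptance region yields $J^1_{\phi^{\sf H}}=\inf\{D(\mu\|\pi^1):D(\mu\|\pi^0)\le\eta\}$. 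Optimality then follows from Hoeffding's classical method-of-types argument \cite{hoe65p369}: any competing test may be replaced, without loss, by one constant on type classes, and the constraint $J^0_\phi\ge\eta$ forces its rejection region asymptotically into $\{D(\,\cdot\,\|\pi^0)\ge\eta\}$, so it cannot improve the $H1$ exponent.

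The substance of the statement is part (2), and here the plan is a second-order Taylor expansion of $\mu\mapsto D(\mu\|\pi^0)$ about $\pi^0$ (assuming $\pi^0$ charges every point of $\zstate$, as is forced under $\pi^0$ up to the support). Writing $\Delta^n\eqdef\Gamma^n-\pi^0$, note $\sum_z\Delta^n(z)=0$. Since $D(\pi^0\|\pi^0)=0$ and the gradient at $\pi^0$ is the constant vector $\mathbf 1$, both the zeroth- and first-order terms vanish (the latter because $\langle\mathbf 1,\Delta^n\rangle=0$). The Hessian at $\pi^0$ is diagonal with entries $1/\pi^0(z)$, so
\begin{equation}
D(\Gamma^n\|\pi^0)=\half\sum_{z\in\zstate}\frac{(\Gamma^n(z)-\pi^0(z))^2}{\pi^0(z)}+R_n,\nonumber
\end{equation}
i.e.\ one half of Pearson's chi-squared statistic plus a cubic remainder $R_n$.

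Next I would apply the multivariate central limit theorem: under $\pi^0$, $\sqrt n\,\Delta^n\Rightarrow W$, with $W$ Gaussian of multinomial covariance $\Sigma_{zz'}=\pi^0(z)\delta_{zz'}-\pi^0(z)\pi^0(z')$, so that $nD(\Gamma^n\|\pi^0)\Rightarrow\half\sum_z W(z)^2/\pi^0(z)$. Setting $U(z)=W(z)/\sqrt{\pi^0(z)}$, the covariance of $U$ is $I-\rho\rho^{\transpose}$, where $\rho$ is the unit vector with entries $\rho_z=\sqrt{\pi^0(z)}$; this is a rank-$(|\zstate|-1)$ orthogonal projection, so $\sum_z U(z)^2$ is $\chi^2$ with $|\zstate|-1$ degrees of freedom. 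As a $\chi^2_k$ variable has variance $2k$, the limit $\half\chi^2_{|\zstate|-1}$ has variance $\tfrac14\cdot2(|\zstate|-1)=\half(|\zstate|-1)$, the claimed value.

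The main obstacle is that weak convergence alone does not deliver convergence of the \emph{variance}; I must upgrade it to convergence of second moments. The plan is to establish uniform integrability of $\{(nD(\Gamma^n\|\pi^0))^2\}$ under $\pi^0$, controlling both the Gaussian-like bulk and the remainder. Concretely I would bound $R_n$ by a cubic form in $\Delta^n$ and use multinomial moment (or Bennett--Bernstein concentration) estimates to show $\Expect[(nR_n)^2]\to0$ and that the cross terms are negligible, while an exact multinomial moment computation confirms that the squared expectation of the quadratic term converges to the second moment of $\half\chi^2_{|\zstate|-1}$. This moment-control step, rather than the distributional identification, is where the real work lies.
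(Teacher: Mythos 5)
The paper does not actually prove Theorem~\ref{t:performhoe}: it is presented as a summary of known results, with part (1) attributed to \cite{hoe65p369,unnhuameysurvee09} and part (2) to \cite{wil38p60,clabar90p453,csishi04}. Your proposal is therefore not competing with an in-paper argument but reconstructing the cited ones, and it reconstructs them along essentially the standard lines: Sanov plus the method-of-types optimality argument for part (1), and the Wilks-type second-order expansion identifying $nD(\Gamma^n\|\pi^0)$ with $\half\chi^2_{|\zstate|-1}$ for part (2). Both sketches are sound, and you correctly isolate the genuine issue in part (2), namely that weak convergence to $\half\chi^2_{|\zstate|-1}$ does not by itself yield convergence of $\Var[nD(\Gamma^n\|\pi^0)]$, so uniform integrability of the squares must be established.

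One soft spot to flag in your part (2): the remainder $R_n$ is not controlled by ``a cubic form in $\Delta^n$'' with bounded coefficients, because the third derivative of $x\mapsto x\log(x/\pi^0(z))$ behaves like $x^{-2}$ and blows up as $\Gamma^n(z)\to 0$; the Taylor bound is only uniform on a region where all coordinates of $\Gamma^n$ stay bounded away from zero. The standard repair is to split on the event $E_n=\{\min_z\Gamma^n(z)\ge\half\min_z\pi^0(z)\}$: on $E_n$ your cubic bound and multinomial moment estimates go through, while $E_n^c$ has probability $O(e^{-cn})$ and the crude deterministic bound $nD(\Gamma^n\|\pi^0)\le n\log(1/\min_z\pi^0(z))$ makes its contribution to the second moment vanish. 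With that amendment your route is a complete and correct substitute for the citations; it is arguably more self-contained than the paper, at the cost of redoing classical material. (In part (1), note also that only the LDP upper bound is needed for the claim $J^0_{\phi^{\sf H}}\ge\eta$ as stated, and that the liminf definition of the exponents requires the usual care when arguing that a competing test's rejection region must eventually avoid types with $D(\mu\|\pi^0)<\eta-\delta$ for every fixed $\delta>0$ before letting $\delta\to 0$.)
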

\Theorem{performhoe} is a summary of results from \cite{hoe65p369,unnhuameysurvee09}. The second result can be derived from \cite{wil38p60,clabar90p453,csishi04}. It has been demonstrated in \cite{unnhuameysurvee09} that the variance implies a drawback of the Hoeffding test, hidden in the analysis of the error exponent:  Although asymptotically optimal, this test is not effective when the size of the observation space is large compared to the number of observations. 

%

\subsection{Mismatched Universal Test}

It was demonstrated in \cite{unnhuameysurvee09} that the potentially large variance in the Hoeffding test can be addressed by using a generalization of the Hoeffding test called the \emph{mismatched universal test}, which is based on the relaxation of KL divergence introduced in \cite{abbmedmeyliz07p284}. The name of the mismatched divergence comes from literature on mismatched decoding \cite{merkaplapshi94p1953}. 
The mismatched universal test enjoys several advantages:
\begin{enumerate}
 \item It has smaller variance.
 \item It can be designed to be robust to errors in the knowledge of $\pi^0$.
 \item It allows us to incorporate into the test partial knowledge about $\pi^1$  (see \Lemma{pisenough}), 
 as well as other considerations such as the heterogeneous cost of incorrect decisions.
\end{enumerate}

The mismatched universal test is based on the following variational representation of KL divergence,
\begin{equation}
D(\mu\|\pi)=\sup_{f}\bigl(\langle\mu,f\rangle-\log(\langle\pi,e^f\rangle)\bigr)\label{e:klvar}
\end{equation}
where the optimization is taken over all functions $f\colon\zstate\to\Re$.
The supremum is achieved by the log-likelihood ratio. 

The mismatched divergence is defined by restricting the supremum in \eqref{e:klvar} to a function class $\clF$:
\begin{equation}
  \DMM_\clF(\mu\| \pi) := \sup_{f\in \clF}\bigl(\langle\mu,f\rangle-\log(\langle\pi,e^f\rangle)\bigr).
\label{e:Dmm}
\end{equation}
The associated mismatched universal test is defined as
\begin{equation}
\phi_n^{\sf{MM}} =\ind\{\DMM(\Gamma^n\|\pi^0) \geq \eta\}.\nonumber
\end{equation}

In this paper we restrict to the special case of a 
\emph{linear} function class:  $\clF=\bigl\{f_r \eqdef \sum_i^{d} r_i \psi_i  \bigr\}$
where
 $\{\psi_i\}$ is a set of \emph{basis} functions,  and $r$ ranges over $\Re^d$. We assume throughout the paper that $\{\psi_i\}$ is \emph{minimal}, i.e., $\{\mathbf{1},\psi_1,\ldots,\psi_d\}$ are linearly independent. The basis functions can be interpreted as \emph{features} for the universal test.
In this case, the definition \eqref{e:Dmm} reduces to the convex program,
\begin{equation}
\DMM(\mu\|\pi)=\sup_{r \in \Re^d}\bigl(\langle\mu,f_r\rangle-\log(\langle\pi,e^{f_r}\rangle).\nonumber
\end{equation}

The asymptotic variance of the mismatched universal test is proportional to the dimension of the function class $d$ instead of $|\zstate|-1$ as seen in the Hoeffding test:
\[
\lim_{n \to \infty} \Var [n \DMM(\Gamma^n \| \pi^0)] =\half d,
\] 
when $Z_1^n$ has marginal $\pi^0$ \cite{unnhuameysurvee09}. In this way we can expect substantial variance reduction by choosing a small $d$. The function class also determines how well the mismatched divergence $\DMM(\pi^1 \|\pi^0)$ approximates the KL divergence $D(\pi^1 \|\pi^0)$ for possible alternate distributions $\pi^1$and thus the error exponent of the mismatched universal test \cite{huaunnmeyveesur09p62}. In sum, the choice of the basis functions $\{\psi_i\}$ is critical for successful implementation of the mismatched universal test. The goal of this paper is to construct algorithms to construct a suitable basis.

\subsection{Contributions of this paper}
In this paper we propose a framework to design the function class $\clF$, which allows us to make the tradeoff between the error exponent and variance. One of the motivations comes from results presented in \Section{disdis} on the maximum number of \emph{$\epsy$-distinguishable distributions} in an exponential family, which suggests that it is possible to use approximately $d=\log(p)$ basis functions to design a test that is effective against $p$ different distributions. In \Section{s:featureextract} we cast the feature extraction problem as a rank constrained optimization problem, and propose a gradient-based algorithm with provable local convergence property to solve it. 

The construction of a basis studied in this paper is a particular case of the feature extraction problems that have been studied in many other contexts. In particular, the framework in this paper is connected to the exponential family PCA setting of \cite{coldassch01p617}. The most significant difference between this work and the exponential PCA is that our framework finds features that capture the \emph{difference} between distributions, and the latter finds features that are \emph{common} to the distributions considered. 

The mismatched divergence using empirical distributions can be interpreted as an estimator of KL divergence. To improve upon the Hoeffding test, we may apply other estimators, such as those using data dependent features \cite{wankulver05p3064, qinkulver09p2392}, or those motivated by source-coding techniques \cite{zivmer93p1270} and others \cite{nguwaijor07}. Our approach is different from them in that we exploit the limited possibilities of alternate distributions. 

\section{Distinguishable Distributions}
\label{disdis}

The quality of the approximation of KL divergence using the mismatched divergence depends on the dimension of the function class. The goal of this section is to quantify this statement. 
\subsection{Mismatched Divergence and Exponential Family}
We first describe a simple result suggesting how a basis might be chosen given a finite set of alternate distributions, so that the mismatched divergence is equal to the KL divergence for those distributions:
\begin{lemma}
\label{t:pisenough}
For any $p$ possible alternate distributions $\{\pi^1, \pi^2, \ldots, \pi^p\}$, absolutely continuous with respect to $\pi^0$, there exist $d=p$ basis functions $\{\psi_1, \ldots, \psi_d\}$ such that $\DMM(\pi^i\|\pi^0)=D(\pi^i\|\pi^0)$ for each $i$.
These functions can be chosen to be the log-likelihood ratios $\{\psi_i = \log(\pi^i/\pi^0)\}$.
 \qed
\end{lemma}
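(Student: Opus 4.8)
The plan is to exploit the variational representation \eqref{e:klvar} together with the fact that the mismatched divergence \eqref{e:Dmm} is obtained by restricting the \emph{same} supremum to the linear class $\clF$. This restriction immediately gives the inequality $\DMM_\clF(\pi^i\|\pi^0) \le D(\pi^i\|\pi^0)$ for every $i$, since a supremum over $f \in \clF$ cannot exceed the supremum over all $f\colon\zstate\to\Re$. The entire content of the lemma is therefore the reverse inequality, and to establish it I would exhibit a single function already lying in $\clF$ that attains the value $D(\pi^i\|\pi^0)$.

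First I would set $\psi_i = \log(\pi^i/\pi^0)$ for $i=1,\ldots,p$ and take $\clF$ to be their linear span, so that $d=p$. For a fixed index $i$, the function $f=\psi_i$ lies in $\clF$ (it is the span element with coefficient vector $r=e_i$), so by definition \eqref{e:Dmm},
\[
\DMM_\clF(\pi^i\|\pi^0) \ge \langle \pi^i, \psi_i\rangle - \log\bigl(\langle \pi^0, e^{\psi_i}\rangle\bigr).
\]
The key computation is that this lower bound equals $D(\pi^i\|\pi^0)$ exactly: the first term is $\langle\pi^i,\log(\pi^i/\pi^0)\rangle = D(\pi^i\|\pi^0)$ by definition, while the normalization term collapses because $\langle\pi^0,e^{\psi_i}\rangle = \sum_z \pi^0(z)\,(\pi^i(z)/\pi^0(z)) = \sum_z \pi^i(z) = 1$, whose logarithm vanishes. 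This is precisely the assertion (already noted after \eqref{e:klvar}) that the log-likelihood ratio is the maximizer, specialized to the pair $(\pi^i,\pi^0)$. Combining the two inequalities yields $\DMM_\clF(\pi^i\|\pi^0)=D(\pi^i\|\pi^0)$, and since the argument uses the \emph{same} class $\clF$ for every $i$, the lemma follows.

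The one point requiring care is the absolute-continuity hypothesis, which enters when some $z$ has $\pi^0(z)>0$ but $\pi^i(z)=0$, so that $\psi_i(z)=-\infty$. I would dispatch this with the usual conventions $0\cdot(-\infty)=0$ and $e^{-\infty}=0$: such points contribute nothing to either $\langle\pi^i,\psi_i\rangle$ or $\langle\pi^0,e^{\psi_i}\rangle$, leaving both the evaluation of $D$ and the collapse of the normalization term intact (equivalently, one restricts to the support of $\pi^0$, which is legitimate since $\pi^i\ll\pi^0$ forces the mass of $\pi^i$ to live there). A secondary bookkeeping remark concerns the standing minimality assumption on $\{\mathbf{1},\psi_1,\ldots,\psi_d\}$: if the chosen log-likelihood ratios fail to be linearly independent together with the constant function, one simply discards the redundant $\psi_i$'s and proceeds with a smaller basis, which only strengthens the conclusion. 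I expect no genuine obstacle here — the result is essentially a one-line consequence of the sharpness of the variational representation, and the only substantive step is recognizing that placing the optimizer of \eqref{e:klvar} into the spanning set is exactly what forces equality.
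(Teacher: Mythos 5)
Your argument is correct and is exactly the one the paper intends: the lemma is stated with its proof folded into the final sentence (choose $\psi_i=\log(\pi^i/\pi^0)$, note that the supremum in \eqref{e:klvar} is attained at the log-likelihood ratio, which now lies in $\clF$, and that restriction gives the reverse inequality for free). Your explicit verification that $\langle\pi^0,e^{\psi_i}\rangle=1$ and your handling of the absolute-continuity edge case are the right details to supply.
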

It is overly pessimistic to say that given $p$ distributions we require $d=p$ basis functions. In fact, \Lemma{perfectapprox} demonstrates that if all $p$ distributions are in the same $d$-dimensional \emph{exponential family}, then $d$ basis functions suffices. We first recall the definition of an exponential family: For a function class $\clF$ and a distribution $\nu$, the exponential family $\EXPF(\nu, \clF)$ is defined as: 
\[
\EXPF(\nu,\clF)=\{\mu: \mu(z)=\frac{\nu(z) e^{f(z)}}{\langle\nu, e^{f}\rangle}, f \in \clF\}.
\] 
We will restrict to the case of linear function class, and we say that the exponential family is $d$-dimensional if this is the dimension of the function class $\clF$. The following lemma is a reinterpretation of \Lemma{pisenough} for the exponential family: 
\begin{lemma}
\label{t:perfectapprox}
Consider any $p+1$ mutually absolutely continuous distributions $\{\pi^i: 0\leq i \leq p\}$. Then $\DMM_\clF(\pi^i\|\pi^j)=D(\pi^i\|\pi^j)$ for all $i\neq j$ if and only if $\pi^i \in \EXPF(\pi^0, \clF)$ for all $i$.
\end{lemma}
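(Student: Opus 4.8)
The plan is to reduce the statement about all pairs to a single-pair characterization and then exploit the linearity of $\clF$. The key lemma I would prove first is: for mutually absolutely continuous $\mu,\pi$ (restricted to their common support, on which all log-likelihood ratios are finite), $\DMM_\clF(\mu\|\pi)=D(\mu\|\pi)$ if and only if $\mu\in\EXPF(\pi,\clF)$. The starting point is the elementary reformulation that $\mu\in\EXPF(\pi,\clF)$ holds exactly when $\log(\mu/\pi)=f+c$ for some $f\in\clF$ and some constant $c$: the normalization in the exponential-family density forces $c=-\log\langle\pi,e^{f}\rangle$, so membership is equivalent to $\log(\mu/\pi)$ lying in the span of $\clF$ and the constant function $\mathbf{1}$. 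The sufficiency direction is then a one-line computation: if $\log(\mu/\pi)=f+c$ with $f\in\clF$, substituting $f$ into the objective of \eqref{e:Dmm} and using $\langle\mu,\mathbf{1}\rangle=1$ shows the objective already equals $\langle\mu,\log(\mu/\pi)\rangle=D(\mu\|\pi)$, while $\DMM_\clF(\mu\|\pi)\le D(\mu\|\pi)$ always holds because $\clF$ only shrinks the supremum in \eqref{e:klvar}.

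For the necessity direction I would use a Pythagorean-type identity that holds for every $r\in\Re^d$,
\[
D(\mu\|\pi)=\bigl(\langle\mu,f_r\rangle-\log\langle\pi,e^{f_r}\rangle\bigr)+D(\mu\|\pi_r),
\]
where $\pi_r\in\EXPF(\pi,\clF)$ is the tilted distribution satisfying $\log(\pi_r/\pi)=f_r-\log\langle\pi,e^{f_r}\rangle$; the identity follows by substituting this expression for $f_r$ into the first bracket, the additive constant cancelling. Taking the supremum over $r$ gives $\DMM_\clF(\mu\|\pi)=D(\mu\|\pi)-\inf_r D(\mu\|\pi_r)$, so equality forces $\inf_r D(\mu\|\pi_r)=0$. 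To promote this to genuine membership $\mu\in\EXPF(\pi,\clF)$ rather than membership in its closure, I need the infimum to be attained; this is the main obstacle and essentially the only place the hypotheses do real work. Here I would show that $r\mapsto\langle\mu,f_r\rangle-\log\langle\pi,e^{f_r}\rangle$ is strictly concave, since its Hessian is the covariance matrix of $(\psi_1,\dots,\psi_d)$ under $\pi_r$, which is positive definite precisely because $\{\mathbf{1},\psi_1,\dots,\psi_d\}$ is minimal, and that it is coercive, since along every ray the log-partition term eventually dominates: each nonconstant $f_r$ has expectation under $\mu$ strictly between its minimum and maximum over the common support. A unique maximizer $r^\ast$ then exists, $\inf_r D(\mu\|\pi_r)=D(\mu\|\pi_{r^\ast})$, and this vanishing forces $\mu=\pi_{r^\ast}\in\EXPF(\pi,\clF)$.

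With the single-pair lemma established, the full statement is immediate. For the forward implication I apply the lemma only to the pairs with $j=0$: equality $\DMM_\clF(\pi^i\|\pi^0)=D(\pi^i\|\pi^0)$ gives $\pi^i\in\EXPF(\pi^0,\clF)$ for each $i\ge 1$, and $\pi^0\in\EXPF(\pi^0,\clF)$ holds trivially. For the converse I use the linearity of $\clF$ together with the reformulation above: writing $\log(\pi^i/\pi^0)=g_i+c_i$ and $\log(\pi^j/\pi^0)=g_j+c_j$ with $g_i,g_j\in\clF$, we get $\log(\pi^i/\pi^j)=(g_i-g_j)+(c_i-c_j)$, which lies in the span of $\clF$ and $\mathbf{1}$; hence $\pi^i\in\EXPF(\pi^j,\clF)$, and the single-pair lemma yields $\DMM_\clF(\pi^i\|\pi^j)=D(\pi^i\|\pi^j)$ for every $i\neq j$. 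Thus the substance of the argument is the attainment step in the necessity half: the algebraic identity is effortless and holds for all $r$, but converting $\inf_r D(\mu\|\pi_r)=0$ into exact membership in the (not a priori closed) exponential family is exactly where strict concavity and coercivity—and therefore the minimality of the basis and the mutual absolute continuity—become indispensable, while the two easy directions are bookkeeping with constant-shift invariance and the linearity of $\clF$.
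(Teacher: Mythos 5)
Your proof is correct, and it supplies considerably more than the paper does: the paper states this lemma without any proof, remarking only that it is ``a reinterpretation of Lemma~\ref{t:pisenough}'', which at best covers the easy direction (if $\log(\pi^i/\pi^j)$ lies in $\clF$ plus constants, that function attains the supremum in \eqref{e:Dmm}, so $\DMM_\clF=D$ there). You correctly isolate the substantive content as the converse, and your route --- the Pythagorean identity $D(\mu\|\pi)=\bigl(\langle\mu,f_r\rangle-\log\langle\pi,e^{f_r}\rangle\bigr)+D(\mu\|\pi_r)$, hence $\DMM_\clF(\mu\|\pi)=D(\mu\|\pi)-\inf_r D(\mu\|\pi_r)$, followed by an attainment argument (strict concavity plus coercivity of the dual objective) to upgrade $\inf_r D(\mu\|\pi_r)=0$ to $\mu=\pi_{r^\ast}\in\EXPF(\pi,\clF)$ --- is the standard and correct way to do it. The reduction of the all-pairs statement to the $j=0$ pairs via constant-shift invariance and linearity of $\clF$ is also right.

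One caveat worth recording: your strict concavity and coercivity claims invoke minimality of $\{\mathbf{1},\psi_1,\dots,\psi_d\}$ as functions on all of $\zstate$, but what is actually needed is affine independence on the common support $S$ of the given distributions. The Hessian is the covariance of $(\psi_1,\dots,\psi_d)$ under $\pi_r$, which charges only $S$, and coercivity along a ray $tv$ fails exactly when $f_v$ is constant on $S$ (then $t\langle\mu,f_v\rangle$ and $\log\langle\pi,e^{tf_v}\rangle$ cancel identically). If $S\subsetneq\zstate$, both properties can fail even for a minimal basis; the fix is routine --- quotient by the subspace $\{r: f_r \text{ constant on } S\}$, on which $r\mapsto\pi_r$ is constant, and run the argument on the quotient --- but as written the justification ``positive definite precisely because the basis is minimal'' is only valid when the distributions have full support.
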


\subsection{Distinguishable Distributions}
Except in trivial cases, there are obviously infinitely many distributions in an exponential family. In order to characterize the difference between different exponential families of different dimension, we consider a subset of distributions which we call $\epsy$-distinguishable distributions.

The motivation comes from the fact that KL divergences between two distributions are infinite if neither is absolutely continuous with respect to the other, in which case we say they are \emph{distinguishable}. When the distributions are distinguishable, we can design a test that achieves infinite error exponent. For example, consider two distributions $\pi^0,\pi^1$ on $\zstate=\{z_1,z_2,z_3\}$: $\pi^0(z_1)=\pi^0(z_2)=0.5$; $\pi^1(z_2)=\pi^1(z_3)=0.5$. It is easy to see that the two error exponents of the test $\phi_n(Z_1^n)=\ind\{\Gamma^n(z_3)>0.2\}$ are both infinite. It is then natural to ask: Given $p$ distributions that are pairwise distinguishable, how many basis functions do we need to design a test that is effective for them? 

Distributions in an exponential family must have the same support. We thus consider distributions that are approximately distinguishable, which leads to the definitions listed below: Consider the set-valued function $F^\epsilon$ parametrized by $\epsilon>0$,
\[		F^\epsilon(x):=\{z: x(z) \geq \max_z(x(z))-\epsilon\}\]
\vspace{-0.2in}
\begin{romannum}
\item[$\bullet$]
Two distributions $\pi^1,\pi^2$ are \textit{$\epsilon$-distinguishable}  if $F(\pi^1)\setminus F(\pi^2) \neq \emptyset$ and $F(\pi^2)\setminus F(\pi^1) \neq \emptyset$.

\item[$\bullet$]
A distribution $\pi$ is called \textit{$\epsilon$-extremal} if $\pi(F^\epsilon(\pi)) \geq 1-\epsilon$,
and a set of distributions $\mathcal{A}$ is called $\epsilon$-extremal if every $\pi \in \mathcal{A}$ is $\epsilon$-extremal.

\item[$\bullet$]
For an exponential family $\EXPF$, 
the integer $N(\EXPF)$ is defined as the maximum $N$ such that there exists an $\epsilon_0>0$ such that for any $0<\epsy<\epsilon_0$, there exists an $\epsy $-extremal ${\mathcal{A}} \subseteq \EXPF$ such that $|\mathcal{A}| \geq N$ and any two distributions in $\mathcal{A}$ are $\epsy $-distinguishable. 
\end{romannum}

One interpretation of the final definition is that the test using a function class $\clF$ 
is effective against $N(\EXPF)$ distributions, in the sense that the error exponents for the mismatched universal test are the same as for the Hoeffding test, where $\EXPF =\EXPF(\nu, \clF)$: 
\begin{lemma}\label{t:expfimplication}
Consider a function class $\clF$ and its associated exponential family $\EXPF =\EXPF(\nu, \clF)$, where $\nu$ has full support,  and define $N=N(\EXPF(\nu,\clF))$.   Then, there exists a sequence $\{A^{(1)}, A^{(2)}, \ldots, A^{(m)}: m\ge 1 \} $,  such that for each $k$ the set $A^{(k)}\subset \EXPF$ consists of $N$ distributions,
\[\DMM_\clF(\pi\|\pi')=D(\pi,\pi') \quad \textrm{for any $\pi,\pi' \in A^{(k)}$}\]
and
\begin{equation}
\lim_{k\to\infty}  \min_{\atop{\pi,\pi'\in A^{(k)} }{\pi\neq \pi'}}
\DMM_\clF(\pi\|\pi') = \infty. \nonumber
\end{equation}\vspace{-0.15in}\qed
\end{lemma}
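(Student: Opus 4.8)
The plan is to read the sets $A^{(k)}$ straight off the definition of $N=N(\EXPF)$, and then verify the two assertions separately: the identity $\DMM_\clF=D$ on each $A^{(k)}$ from \Lemma{perfectapprox}, and the divergence blow-up from a lower bound driven by $\epsilon$-distinguishability and $\epsilon$-extremality. By definition of $N$ there is an $\epsilon_0>0$ so that for every $0<\epsilon<\epsilon_0$ one can find an $\epsilon$-extremal subset of $\EXPF$ with at least $N$ members that is pairwise $\epsilon$-distinguishable. I fix any sequence $\epsilon_k\downarrow 0$ with $\epsilon_k<\epsilon_0$, take the corresponding subset, and discard members to obtain $A^{(k)}\subseteq\EXPF$ with exactly $N$ elements; it remains $\epsilon_k$-extremal and pairwise $\epsilon_k$-distinguishable.

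For the identity $\DMM_\clF(\pi\|\pi')=D(\pi\|\pi')$ on $A^{(k)}$, note first that since $\nu$ has full support every member of $\EXPF(\nu,\clF)$ is strictly positive on $\zstate$, so the elements of $A^{(k)}$ are mutually absolutely continuous. Moreover, since $\clF$ is a linear class, writing $\pi\propto\nu e^{f}$ and $\pi'\propto\nu e^{g}$ with $f,g\in\clF$ gives $\pi'\propto\pi e^{g-f}$ with $g-f\in\clF$, so all members of $A^{(k)}$ lie in a common exponential family $\EXPF(\pi^0,\clF)$ based at any fixed $\pi^0\in A^{(k)}$. \Lemma{perfectapprox} then yields the identity for every pair drawn from $A^{(k)}$.

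The substantive step is the divergence bound. Fix $k$ and a pair $\pi\neq\pi'$ in $A^{(k)}$. By $\epsilon_k$-distinguishability there is a point $z^*$ (depending on the ordered pair) with $z^*\in F^{\epsilon_k}(\pi)\setminus F^{\epsilon_k}(\pi')$. The first membership gives $\pi(z^*)\geq\max_z\pi(z)-\epsilon_k\geq|\zstate|^{-1}-\epsilon_k$, since the maximum of a probability vector is at least its average; the second gives $z^*\notin F^{\epsilon_k}(\pi')$, and because $\pi'$ is $\epsilon_k$-extremal the total $\pi'$-mass off $F^{\epsilon_k}(\pi')$ is at most $\epsilon_k$, whence $\pi'(z^*)\leq\epsilon_k$. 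Grouping $\zstate$ into $\{z^*\}$ and its complement and applying the log-sum inequality bounds $D(\pi\|\pi')$ below by the binary divergence between $\pi(z^*)$ and $\pi'(z^*)$; with $\pi(z^*)\geq\tfrac{1}{2}|\zstate|^{-1}$ once $\epsilon_k<\tfrac{1}{2}|\zstate|^{-1}$ and $\pi'(z^*)\leq\epsilon_k$, this lower bound grows like $\tfrac{1}{2}|\zstate|^{-1}\log(1/\epsilon_k)$ and in particular tends to $\infty$.

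This bound depends only on $\epsilon_k$ and $|\zstate|$, not on the chosen pair, so it also holds for the minimum over the finitely many pairs in $A^{(k)}$; letting $k\to\infty$ and substituting $\DMM_\clF=D$ from the first part gives the asserted divergence to infinity. The main obstacle I anticipate is making the third step both correct and uniform: $D(\pi\|\pi')$ is not a sum of nonnegative terms, so one cannot simply retain the $z^*$ summand, and the clean route is the log-sum (equivalently data-processing) reduction to a two-point divergence. One must also check that $\epsilon_k$-extremality of $\pi'$ is exactly what converts the set-level fact $z^*\notin F^{\epsilon_k}(\pi')$ into the pointwise estimate $\pi'(z^*)\leq\epsilon_k$, and that the lower bound on $\pi(z^*)$ survives uniformly in the pair as $\epsilon_k\to 0$.
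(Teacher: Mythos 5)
The paper states this lemma with the proof omitted (the \verb|\qed| is attached to the statement and no argument appears in the appendix), so there is no official proof to compare against. Your argument is correct and complete: reading the sets $A^{(k)}$ off the definition of $N(\EXPF)$ along a sequence $\epsilon_k\downarrow 0$, getting $\DMM_\clF=D$ from \Lemma{perfectapprox} (after observing that full support of $\nu$ gives mutual absolute continuity and that linearity of $\clF$ lets you rebase the family at any of its members), and then deriving the blow-up from $\pi(z^*)\geq |\zstate|^{-1}-\epsilon_k$ together with $\pi'(z^*)\leq\epsilon_k$ is exactly the intended use of the $\epsilon$-extremal/$\epsilon$-distinguishable definitions. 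You also correctly flagged and resolved the one genuine pitfall --- that one cannot just keep the $z^*$ summand of $D(\pi\|\pi')$ but must pass through the two-point (data-processing) reduction, whose binary divergence is bounded below by $a\log(a/b)-1/e$ with $a\geq\tfrac12|\zstate|^{-1}$ and $b\leq\epsilon_k$, uniformly over the finitely many ordered pairs. No gaps.
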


Let $\mathcal{P}(d)$ denote the collection of all $d$-dimensional exponential families. Define $\bar{N}(d)=\max_{\EXPF \in \mathcal{P}(d)} N(\EXPF)$. In the next result we give lower and upper bounds on $\bar{N}(d)$, which imply that $\bar{N}(d)$ depends exponentially on $d$:

\begin{proposition}
\label{t:lowerupperexp}
The maximum $\bar{N}(d)=\max_{\EXPF  } N(\EXPF)$ admits the following lower and upper bounds:
\begin{eqnarray}
\!\!\!
\!\!\!
\bar{N}(d) &\ge& \exp\big(\lfloor \frac{d}{2}\rfloor [\log(|\zstate|)-\log \lfloor \frac{d}{2}\rfloor -1]\big)
\label{e:lowerexp}
\\[.25cm]
\!\!\!
\!\!\!
\bar{N}(d) &\le&  \exp\big((d+1)(1+\log(|\zstate|)-\log(d+1))\big) 
\label{e:upperexp}
\end{eqnarray}
\end{proposition}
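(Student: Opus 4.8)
The plan is to reduce both inequalities to a counting problem for the \emph{top sets} $F^{\epsy}(\pi)$ that distributions in a $d$-dimensional exponential family $\EXPF=\EXPF(\nu,\clF)$ can realize. The key observation is that if $\mathcal{A}\subseteq\EXPF$ is $\epsy$-extremal and pairwise $\epsy$-distinguishable, then the sets $\{F^{\epsy}(\pi):\pi\in\mathcal{A}\}$ are pairwise incomparable under inclusion, hence distinct; thus $|\mathcal{A}|$ is bounded by the number of distinct subsets of $\zstate$ arising as $F^{\epsy}(\pi)$ for some $\pi\in\EXPF$, and this bound is uniform in $\epsy$. Writing $n:=|\zstate|$ and $k:=\lfloor d/2\rfloor$, the lower bound \eqref{e:lowerexp} will follow by exhibiting an $\EXPF$ whose realizable top sets contain an antichain of size roughly $(n/k)^k$, and the upper bound \eqref{e:upperexp} by capping the number of realizable top sets via a VC-dimension argument.

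For the lower bound I would partition $\zstate$ into $k$ blocks $B_1,\dots,B_k$, each of size $m=\lfloor n/k\rfloor$, take $\nu$ uniform, and assign two basis functions to each block, supported on that block and placing its $m$ points in convex position (on a circle) in the associated coordinate plane; this uses $2k\le d$ features. For any \emph{selector} $(s_1,\dots,s_k)$ with $s_j\in B_j$, a block-diagonal parameter whose $j$-th component points toward $s_j$ with a large common magnitude $\rho$ yields a distribution that within each block concentrates on the unique extreme point $s_j$ in that direction, and across blocks assigns equal mass to the $k$ selected points. Taking $\rho=\rho(\epsy)$ large makes this distribution $\epsy$-extremal with $F^{\epsy}=\{s_1,\dots,s_k\}$, and distinct selectors differ in some block, giving $\epsy$-distinguishability. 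Since these $m^k$ sets all have cardinality $k$ they form an antichain, and $m^k=\lfloor n/k\rfloor^k\ge (n/(ek))^k=\exp\!\big(k[\log n-\log k-1]\big)$ (the inequality being trivial when the right-hand side is below $1$, and otherwise following from $\lfloor n/k\rfloor\ge n/k-1\ge n/(ek)$), which is \eqref{e:lowerexp}.

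For the upper bound I would lift each observation to $(\psi_1(z),\dots,\psi_d(z),\log\nu(z))\in\Re^{d+1}$. For $\pi_r(z)\propto\nu(z)e^{\langle r,\psi(z)\rangle}$, clearing the state-independent normalizer turns the membership condition $\pi_r(z)\ge\max_w\pi_r(w)-\epsy$ into $\langle r,\psi(z)\rangle+\log\nu(z)\ge t(r)$ for a threshold $t(r)$ depending only on $r$. Hence every top set $F^{\epsy}(\pi_r)$ is the trace on the $n$ lifted points of a halfspace whose normal has last coordinate fixed to $1$. Bounding over all $r$ and all thresholds, these lie in the class $\setofhalfspace=\{\{x\in\Re^{d+1}:\langle(r,1),x\rangle\ge t\}\}$ of non-vertical halfspaces, which is parametrized by $d+1$ numbers and satisfies $\VCdim(\setofhalfspace)=d+1$. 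By the Sauer--Shelah lemma the number of distinct traces, hence the number of realizable top sets, is at most $\sum_{j=0}^{d+1}\binom{n}{j}\le\big(en/(d+1)\big)^{d+1}=\exp\!\big((d+1)(1+\log n-\log(d+1))\big)$, giving \eqref{e:upperexp}.

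The main obstacle is the VC step of the upper bound: one must verify that clearing the normalizer genuinely converts the $\epsy$-gap condition on the probabilities (not their logarithms) into a \emph{single} linear threshold in $\Re^{d+1}$, and then that the constrained class $\setofhalfspace$ has VC dimension exactly $d+1$ rather than the $d+2$ of unrestricted halfspaces in $\Re^{d+1}$---it is precisely the fixed last coordinate of the normal that removes one dimension and produces the ``$+1$'' in the exponent. A secondary technical point, needed to make the reduction in the first paragraph rigorous, is to confirm that the limiting $\epsy\to0$ behaviour (near-uniformity of extremal distributions on their top sets, and the exact identity $F^{\epsy}(\pi)=\{s_1,\dots,s_k\}$ in the construction) persists for all sufficiently small $\epsy$, as demanded by the definition of $N(\EXPF)$.
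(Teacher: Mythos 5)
Your proposal is correct and follows essentially the same route as the paper: the lower bound via a block (Kronecker-product) construction using two features per block to realize every singleton of the block as a top set, and the upper bound by identifying each realizable top set with a halfspace trace on the $|\zstate|$ lifted points and applying Sauer's lemma with $\VCdim=d+1$. The only substantive difference is cosmetic care: you lift to $\Re^{d+1}$ by appending $\log\nu(z)$ and restrict to halfspaces with fixed last normal coordinate, whereas the paper works directly with the points $(\psi_1(z),\dots,\psi_d(z))\in\Re^d$ and all halfspaces there; both classes have VC dimension $d+1$ and yield the same bound.
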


It is important to point out that $\bar{N}(d)$ is exponential in $d$. This answers the question asked at the beginning of this section: There exist $p$ approximately distinguishable distributions for which we can design an effective mismatched test using approximately $\log(p)$ basis functions.


\section{Feature Extraction via \\Rank-constrained Optimization}\label{s:featureextract}

Suppose that it is known that the alternate distributions can take on $p$ possible values, denoted by $\pi^1,\pi^2,\ldots,\pi^p$.   Our goal is to choose
the function class $\clF$ of dimension $d$ so that the mismatched divergence approximates the KL divergence for these alternate distributions, while at the same time keeping the variance small in the associated universal test. The choice of $d$ gives the tradeoff between the quality of the approximation and the variance in the mismatched universal test. 
We assume that $0<D(\pi^i\|\pi^0)<\infty$ for all $i$.\footnote{In practice the possible alternate distributions will likely take on a continuum of possible values.    It is our wishful thinking that we can choose a finite approximation with $p$ distributions, and choose $d$ much smaller than $p$, and the resulting mismatched universal test will be effective against all alternate distributions. Validation of this optimism will be left to future work.}

We propose to use the solution to the following problem as the function class:
\begin{equation}\label{e:optproori}
\max_{\clF} \{\frac{1}{p}\sum_{i=1}^p \gamma^i \DMM_{\clF}(\pi^i\|\pi^0): \dim(\clF)\leq d\} 
\end{equation}
where $\dim{\clF}$ is the dimension of the function class $\clF$. The weights $\{\gamma_i\}$ can be chosen to reflect the importance of different alternate distributions. This can be rewritten as the following rank-constrained optimization problem: 
\begin{equation}\label{e:optpro}
\begin{array}{rl}
\max & \frac{1}{p}\sum_{i=1}^p \gamma^i \bigl(\langle\pi^i,X_i\rangle-\log(\langle\pi^0,e^{X_i}\rangle\bigr) 
\\[.25cm]
\st & \Rank(X)\leq d
\end{array}
\end{equation}
where the optimization variable $X$ is a $p\times |\zstate|$ matrix,   
and $X_i$ is the $i$th row of $X$, interpreted as a function on $\zstate$. 
Given an optimizer $X^*$, we choose $\{\psi_i\}$ to be the set of right singular vectors of $X^*$ corresponding to nonzero singular values. 
%

\vspace{-0.03in}
\subsection{Algorithm}\label{subsecalgorithm}
The optimization problem in \eqref{e:optpro} is not a convex problem since it has a rank constraint. It is generally very difficult to design an algorithm that is guaranteed to find a global maximum. The algorithm proposed in this paper is a generalization of the Singular Value Projection (SVP) algorithm of \cite{mekjaidhi09} designed to solve a low-rank matrix completion problem. It is globally convergent under certain conditions valid for matrix completion problems.  However, in this prior work the objective function is quadratic;  we are not aware of any prior work generalizing these algorithms to the case of a general convex objective function. 

Let $h(X)$ denote the objective function of \eqref{e:optpro}. Let $\mathcal{S}$ denote the set of matrices satisfying $\Rank(X)\leq d$. Let $\Proj_{\mathcal{S}}$ denote the projection onto $\mathcal{S}$: 
\[\Proj_{\mathcal{S}}(Y)=\argmin\{\|Y-X\|: \Rank(X)\leq d\}.\]
where we use $\|\cdot\|$ to denote the Frobenius norm.
The algorithm proposed here is defined as the following iterative gradient projection: 
\begin{enumerate}
 \item $Y^{k+1}=X^k+\alpha^k \nabla h(X^k)$.
 \item $X^{k+1}=\Proj_{\mathcal{S}}(Y^{k+1})$.
\end{enumerate}
The projection step is solved by keeping only the $d$ largest singular values  of $Y^{k+1}$.
The iteration is initialized with some arbitrary $X^0$ and is stopped when the $\|X^{k+1}-X^{k}\| \leq \epsilon$ for some small $\epsilon>0$.
\vspace{-0.05in}
\subsection{Convergence Result}
We can establish local convergence:
\begin{proposition}\label{t:localconverge}
Suppose $\bar{X}$ satisfies $\Rank(\bar{X})=d$ and is a local maximum, i.e. there exists $\delta>0$ such that for any matrix $X \in \mathcal{S}$ satisfying $\|X-\bar{X}\| \leq \delta$, we have $h(\bar{X}) > h(X)$. Choose $\alpha^k=\alpha$ for all $k$ where $0<\alpha<2/(\frac{1}{p}\max_i \gamma^i)$. Then there exists a $\delta'>0$ such that if $X^0$ satisfies $\|X^0-\bar{X}\| \leq \delta'$ and $\Rank(X^0)\leq d$, then $X^k \rightarrow \bar{X}$ as $k \rightarrow \infty$. Moreover, the convergence is geometric. \qed
\end{proposition}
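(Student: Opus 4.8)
The plan is to read the iteration as a gradient-projection fixed-point map $G(X):=\Proj_{\mathcal{S}}\!\big(X+\alpha\nabla h(X)\big)$ and to prove local geometric convergence by showing that $G$ is a contraction near $\bar{X}$ on the smooth manifold of rank-$d$ matrices. First I would record the analytic properties of $h$. Writing $q^i$ for the tilted distribution with $q^i(z)\propto\pi^0(z)e^{X_i(z)}$, a direct computation gives the (block-diagonal) Hessian $\nabla^2_{X_i}h=-\frac{\gamma^i}{p}\big(\Diag(q^i)-q^i(q^i)^{\transpose}\big)$, the negative of a covariance matrix. Since $\sum_z q^i(z)v(z)^2\le\|v\|^2$, each block has spectral norm at most $\gamma^i/p$, so $-L\,I\preceq\nabla^2 h(X)\preceq 0$ with $L=\frac1p\max_i\gamma^i$; in particular $h$ is concave and $C^2$, and for $0<\alpha<2/L$ the operator $I+\alpha\nabla^2 h(X)$ has all eigenvalues in $(-1,1]$, which is exactly the origin of the stated step-size bound. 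I would also note that, since $\Rank(\bar{X})=d$ exactly, in a small ball around $\bar{X}$ the feasible set $\mathcal{S}$ coincides with the smooth embedded manifold $\mathcal{M}=\{X:\Rank(X)=d\}$, and $\sigma_d(\bar{X})>0=\sigma_{d+1}(\bar{X})$.

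Next I would establish that $G$ is $C^1$ near $\bar{X}$ and that $\bar{X}$ is a fixed point. Because the $d$-th singular value is strictly positive (hence strictly larger than the $(d+1)$-th) at $\bar{X}$ and at $\bar Y:=\bar{X}+\alpha\nabla h(\bar{X})$, the truncated SVD is unique with a spectral gap, so $\Proj_{\mathcal{S}}$ is single-valued and $C^1$ near each point. Optimality of $\bar{X}$ on $\mathcal{M}$ forces the tangential part of $\nabla h(\bar{X})$ to vanish, i.e. $\nabla h(\bar{X})$ lies in the normal space; writing $\bar{X}=U\Sigma V^{\transpose}$ this means $\nabla h(\bar{X})=U_\perp C V_\perp^{\transpose}$. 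Block-diagonalizing $\bar Y$ in the bases $[U,U_\perp]$ and $[V,V_\perp]$ shows its singular values are $\{\sigma_i(\bar{X})\}\cup\{\alpha\sigma_j(C)\}$, whence $\Proj_{\mathcal{S}}(\bar Y)=\bar{X}$, i.e. $\bar{X}$ is a fixed point of $G$, provided the reach-type condition $\alpha\|\nabla h(\bar{X})\|_{\mathrm{op}}\le\sigma_d(\bar{X})$ holds.

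It then remains to bound the spectral radius of $DG(\bar{X})=D\Proj_{\mathcal{S}}(\bar Y)\,(I+\alpha\nabla^2 h(\bar{X}))$ below $1$; the standard fixed-point contraction theorem then yields the existence of $\delta'$ and geometric convergence of $X^k\to\bar{X}$ from any feasible $X^0$ with $\|X^0-\bar{X}\|\le\delta'$. The derivative $D\Proj_{\mathcal{S}}(\bar Y)$ maps into the tangent space $T_{\bar{X}}\mathcal{M}$, so I would restrict the analysis there. On $T_{\bar{X}}\mathcal{M}$ the strict-maximum hypothesis makes the relevant operator $P_T\nabla^2 h(\bar{X})P_T$ negative definite, $\preceq-\mu I$ for some $\mu>0$, while it is always $\succeq-L\,I$; together with $0<\alpha<2/L$ this places the eigenvalues of $(I+\alpha P_T\nabla^2 h P_T)|_{T}$ in $(-1,1)$, giving a contraction factor $\rho=\max(|1-\alpha L|,|1-\alpha\mu|)<1$.

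The main obstacle is controlling the derivative of the \emph{nonconvex} projection $\Proj_{\mathcal{S}}$. Unlike projection onto a convex set, $D\Proj_{\mathcal{S}}(\bar Y)$ carries a curvature correction — the Weingarten map of $\mathcal{M}$ evaluated in the normal direction $\alpha\nabla h(\bar{X})$ — and one must show both that this term (together with the gap between the Riemannian and Euclidean Hessians) does not push the spectral radius back up to $1$, and that the tangential contraction $\rho<1$ survives. The assumptions $\Rank(\bar{X})=d$ and strict local maximality are precisely what make the normal directions benign: points of $\mathcal{M}$ near $\bar{X}$ leave the tangent plane only quadratically, so the normal component of $\nabla h(\bar{X})$ enters at second order and is absorbed by shrinking $\delta'$. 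The one genuinely extra requirement is the reach-type condition for exact stationarity; when the active normal gradient is large relative to $\sigma_d(\bar{X})$ this may demand a step smaller than $2/L$, which I would either assume or build into a sharpened step-size bound.
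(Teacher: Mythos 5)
Your route is genuinely different from the paper's. The paper never differentiates the nonconvex projection $\Proj_{\mathcal{S}}$: it replaces $\mathcal{S}$ near $\bar{X}$ by the linear space $\mathcal{H}=\{\bar{X}W_1+W_2\bar{X}\}$ (the tangent space to the rank variety), shows in \Lemma{approx1} that the two sets approximate each other quadratically, deduces in \Lemma{projerror} that the two projections differ by $O(\|Y^k-\bar{X}\|^{3/2})$, runs the standard projected-gradient contraction on the \emph{convex} set $\mathcal{H}$ (using the Lipschitz constant $L=\frac{1}{p}\max_i\gamma^i$ and the fact that $\bar{X}$ remains a local maximum on $\mathcal{H}$), and absorbs the $\mathcal{S}$-versus-$\mathcal{H}$ discrepancy as a higher-order additive error in an induction. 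You instead linearize the full map $G=\Proj_{\mathcal{S}}\circ(I+\alpha\nabla h)$ at $\bar{X}$ and argue that its spectral radius is below one. Your Hessian computation and the resulting value of $L$ are correct and match the paper's Lipschitz lemma.

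There are, however, two genuine gaps. (i) You assert that strict local maximality forces $P_T\nabla^2h(\bar{X})P_T\preceq-\mu I$ for some $\mu>0$. It does not: a strict local maximum can be second-order degenerate (compare $-x^4$), so the hypothesis only yields negative \emph{semi}definiteness; without $\mu>0$ your contraction factor $\rho$ is not strictly less than $1$ and the fixed-point argument collapses. (The paper's own outline makes an equally unjustified leap to its $q<1$, so this is not a point where the paper does better, but it is a hole in your argument as written.) (ii) Your dismissal of the curvature correction is wrong. Because $\bar{Y}=\bar{X}+\alpha\nabla h(\bar{X})$ sits off the manifold at normal distance $\alpha\|\nabla h(\bar{X})\|$, the derivative $D\Proj_{\mathcal{S}}(\bar{Y})$ differs from $P_T$ by a term of size roughly $\alpha\|\nabla h(\bar{X})\|_{\mathrm{op}}/\sigma_d(\bar{X})$ acting on the cross blocks; this is a \emph{first-order} contribution to $DG(\bar{X})$ and does not vanish as $\delta'\to0$ --- ``points leave the tangent plane only quadratically'' does not help once that quadratic departure is paired with a nonzero normal gradient. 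The correct bookkeeping is that this Weingarten term together with $P_T\nabla^2hP_T$ assembles into the Riemannian Hessian, which is the object the local-maximum hypothesis actually controls, and the admissible step size is then governed by the Riemannian rather than the Euclidean curvature bound, so $\alpha<2/L$ need not suffice. On the positive side, your reach condition $\alpha\|\nabla h(\bar{X})\|_{\mathrm{op}}\le\sigma_d(\bar{X})$ identifies a real issue that the paper silently ignores (its inequality $\|Y^{k+1}-\bar{X}\|\le\|X^k-\bar{X}\|$ already presupposes $\nabla h(\bar{X})=0$); flagging it is to your credit, but it, together with (i) and (ii), must be resolved before your proof is complete.
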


Let $\mathcal{H}$ denote the hyperplane $\mathcal{H}=\{\bar{X}W_1+W_2\bar{X}: W_1 \in \Re^{n \times n}, W_2 \in \Re^{p \times p}\}$. The main idea of the proof is that near $\bar{X}$ the set $\mathcal{S}$
can be approximated by this hyperplane $\mathcal{H}$, as demonstrated in \Lemma{approx1}.
\begin{lemma}\label{t:approx1}
There exist $\delta>0$ and $M>0$ such that: 1) for any $X \in \mathcal{S}$ satisfying $\|X-\bar{X}\| \leq \delta$, there exists $Z \in \mathcal{H}$ such that $\|Z-X\|\leq M\|X-\bar{X}\|^2$; 2) for any $Z \in \mathcal{H}$ satisfying $\|Z-\bar{X}\| \leq \delta$, 
there exists $X \in \mathcal{S}$ satisfying $\|X-Z\|\leq M\|Z-\bar{X}\|^2$.
\end{lemma}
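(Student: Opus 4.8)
The plan is to exploit the fact that near the rank-$d$ point $\bar{X}$ the constraint set $\mathcal{S}$ is a smooth manifold whose tangent space at $\bar{X}$ is exactly $\mathcal{H}$, so that the two inclusions are nothing more than the second-order gap between a manifold and its tangent plane. Rather than invoke abstract manifold theory, I would make this concrete through the singular value decomposition. Write $\bar{X}=U\Sigma V^{\transpose}$ with $U\in\Re^{p\times p}$ and $V\in\Re^{n\times n}$ orthogonal ($n=|\zstate|$) and $\Sigma$ having exactly $d$ nonzero diagonal entries, collected in an invertible $d\times d$ block $\Sigma_1$. Since the Frobenius norm is invariant under $X\mapsto U^{\transpose}XV$, I can work entirely in these coordinates, writing a generic matrix in block form $Y=\left(\begin{smallmatrix}A&B\\ C&D\end{smallmatrix}\right)$ with $A$ of size $d\times d$.

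First I would record two structural facts. (i) In these coordinates $\mathcal{H}$ is precisely the subspace of matrices whose lower-right $(p-d)\times(n-d)$ block vanishes: computing $U^{\transpose}(\bar{X}W_1+W_2\bar{X})V$ shows every element of $\mathcal{H}$ has $D=0$, and since both $\mathcal{H}$ and the block-triangular subspace have dimension $pn-(p-d)(n-d)$, the inclusion is an equality. (ii) By the Schur-complement identity, whenever $A$ is invertible the block matrix satisfies $\Rank(Y)=\Rank(A)+\Rank(D-CA^{-1}B)$, so for invertible $A$ the condition $\Rank(Y)\le d$ holds if and only if $D=CA^{-1}B$. Taking $\delta$ below the smallest nonzero singular value of $\bar{X}$ guarantees that $\|A-\Sigma_1\|\le\|Y-\Sigma\|\le\delta$ keeps $A$ invertible with $\|A^{-1}\|$ bounded by a constant $c$ over the whole $\delta$-ball.

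With these in hand both inclusions drop out. For part 1, given $X\in\mathcal{S}$ with $\|X-\bar{X}\|\le\delta$, its block form has $\Rank\le d$ and $A$ invertible, hence $D=CA^{-1}B$; taking $Z$ to be the same matrix with the $D$-block replaced by $0$ puts $Z\in\mathcal{H}$ and, by submultiplicativity, $\|Z-X\|=\|CA^{-1}B\|\le c\,\|B\|\,\|C\|\le c\,\|X-\bar{X}\|^2$, since $B$ and $C$ are submatrices of $Y-\Sigma$ and $\|Y-\Sigma\|=\|X-\bar{X}\|$. For part 2, given $Z\in\mathcal{H}$ with $\|Z-\bar{X}\|\le\delta$, its block form has $D=0$; setting the $D$-block to $CA^{-1}B$ produces a matrix $X$ with $\Rank(X)=\Rank(A)=d$, so $X\in\mathcal{S}$, and the identical estimate gives $\|X-Z\|=\|CA^{-1}B\|\le c\,\|Z-\bar{X}\|^2$. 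Setting $M=c$ completes both parts.

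The step I expect to be the main obstacle is (i): one must verify by direct computation that the generators $\bar{X}W_1+W_2\bar{X}$ all lie in the block-triangular subspace $\{D=0\}$, and then obtain equality from the dimension count, while keeping track that $\bar{X}\in\mathcal{H}$ (so that the translations $Z-\bar{X}$ used implicitly remain in $\mathcal{H}$) and that the orthogonal change of coordinates leaves every Frobenius norm in the bound unchanged. Securing the uniform bound on $\|A^{-1}\|$ over the $\delta$-ball is routine via operator-norm perturbation, and the remainder is Schur-complement bookkeeping.
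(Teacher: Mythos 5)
Your argument is correct and complete. The paper itself states this lemma without proof (the appendix only proves Proposition~\ref{t:lowerupperexp}), so there is no official argument to compare against; but the route you take --- pass to SVD coordinates $Y=U^{\transpose}XV$, identify $\mathcal{H}$ with the subspace $\{D=0\}$ of matrices whose lower-right $(p-d)\times(n-d)$ block vanishes, and use the Guttman/Schur-complement rank identity $\Rank(Y)=\Rank(A)+\Rank(D-CA^{-1}B)$ to characterize $\mathcal{S}$ near $\bar X$ as the graph $D=CA^{-1}B$ --- is the natural one, and every step checks out: the computation $U^{\transpose}(\bar{X}W_1+W_2\bar{X})V=\Sigma\widetilde{W}_1+\widetilde{W}_2\Sigma$ does show the image is exactly $\{D=0\}$ (invertibility of $\Sigma_1$ makes the $A$, $B$, $C$ blocks independently arbitrary, so you do not even need the dimension count); choosing $\delta$ below $\sigma_{\min}(\Sigma_1)$ gives the uniform bound on $\|A^{-1}\|$; and $\|CA^{-1}B\|\leq c\|B\|\|C\|$ with $\|B\|,\|C\|\leq\|Y-\Sigma\|$ yields the quadratic estimate in both directions, using orthogonal invariance of the Frobenius norm to transfer back. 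The only cosmetic remark is that your appeal to a dimension count for the equality $\mathcal{H}=\{D=0\}$ is redundant given the explicit surjectivity onto that block subspace, and that the rank condition in part 2 only needs $\Rank(X)\le d$, which your construction trivially satisfies.
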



Let $Z^{k}=\Proj_{\mathcal{H}}(Y^{k})$, i.e., the projection of $Y^k$ onto $\mathcal{H}$. We obtain from \Lemma{approx1} that $Z^k$ is close to $X^k$ as follows:
\begin{lemma}\label{t:projerror}
Consider any $\bar{X}$ satisfying $\Rank(\bar{X})=d$. There exist $\delta>0$ and $M>0$ such that if $\|Z^k-\bar{X}\|\leq \delta$, then $\|Z^k-X^k\| \leq M\|{Y}^k-\bar{X}\|^\frac{3}{2}$.
\end{lemma}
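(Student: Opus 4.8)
The plan is to exploit \Lemma{approx1}, which says that near $\bar{X}$ the rank-constrained set $\mathcal{S}$ and the tangent subspace $\mathcal{H}$ agree to second order, together with two structural facts: $\bar{X}$ lies in both sets ($\bar{X}\in\mathcal{S}$ since $\Rank(\bar{X})=d$, and $\bar{X}\in\mathcal{H}$ by taking $W_1=I$, $W_2=0$), and $Z^k=\Proj_{\mathcal{H}}(Y^k)$ is an \emph{orthogonal} projection onto a linear subspace. Write $r=\|Y^k-\bar{X}\|$ and work in the regime where $r$ is small, which is the situation of the convergence proof. Because $X^k=\Proj_{\mathcal{S}}(Y^k)$ is a nearest point and $\bar{X}\in\mathcal{S}$, we get $d_{\mathcal{S}}:=\|Y^k-X^k\|\le r$; likewise $d_{\mathcal{H}}:=\|Y^k-Z^k\|\le r$ and $\|Z^k-\bar{X}\|\le r$, so $\|X^k-\bar{X}\|\le 2r$ by the triangle inequality. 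Shrinking $\delta$ if needed, all auxiliary points then sit inside the ball on which \Lemma{approx1} is valid.

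First I would turn the second-order closeness of $\mathcal{S}$ and $\mathcal{H}$ into two quantitative facts. The first part of \Lemma{approx1} applied to $X^k\in\mathcal{S}$ produces a point $\tilde{Z}\in\mathcal{H}$ with $\|\tilde{Z}-X^k\|\le M\|X^k-\bar{X}\|^2\le 4Mr^2$, and the second part applied to $Z^k\in\mathcal{H}$ produces $\tilde{X}\in\mathcal{S}$ with $\|\tilde{X}-Z^k\|\le M\|Z^k-\bar{X}\|^2\le Mr^2$. Comparing distances to the two sets then gives $d_{\mathcal{H}}\le\|Y^k-\tilde{Z}\|\le d_{\mathcal{S}}+4Mr^2$ and $d_{\mathcal{S}}\le\|Y^k-\tilde{X}\|\le d_{\mathcal{H}}+Mr^2$, so $|d_{\mathcal{S}}-d_{\mathcal{H}}|\le 4Mr^2$, and hence, using $d_{\mathcal{S}}+d_{\mathcal{H}}\le 2r$, the squared distances obey $|d_{\mathcal{S}}^2-d_{\mathcal{H}}^2|\le 8Mr^3$.

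The key step is a Pythagorean identity. Since $Z^k$ is the orthogonal projection of $Y^k$ onto the subspace $\mathcal{H}$, the residual $Y^k-Z^k$ is orthogonal to $\mathcal{H}$, while $Z^k-\tilde{Z}\in\mathcal{H}$; therefore
\begin{equation}
\|Z^k-\tilde{Z}\|^2=\|Y^k-\tilde{Z}\|^2-\|Y^k-Z^k\|^2=\|Y^k-\tilde{Z}\|^2-d_{\mathcal{H}}^2. \nonumber
\end{equation}
Bounding $\|Y^k-\tilde{Z}\|\le d_{\mathcal{S}}+4Mr^2$ and expanding, the right-hand side equals $(d_{\mathcal{S}}^2-d_{\mathcal{H}}^2)+8Mr^2 d_{\mathcal{S}}+16M^2r^4$, which is $O(r^3)$ by the previous paragraph (using $d_{\mathcal{S}}\le r$). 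Taking square roots yields $\|Z^k-\tilde{Z}\|=O(r^{3/2})$, and the triangle inequality then gives $\|Z^k-X^k\|\le\|Z^k-\tilde{Z}\|+\|\tilde{Z}-X^k\|\le O(r^{3/2})+4Mr^2\le M'r^{3/2}$ for a suitable constant $M'$ and $r$ small, which is the claim.

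The main obstacle, and the reason the exponent is $\tfrac{3}{2}$ rather than $2$, is precisely this last estimate: the in-plane gap $\|Z^k-\tilde{Z}\|$ is controlled by the square root of the difference of the two squared distances, and that difference is \emph{cubic} in $r$ — a product of the quadratic set-gap $O(r^2)$ supplied by \Lemma{approx1} and the linear size $O(r)$ of the distances themselves — so the square root is $O(r^{3/2})$. The remaining work is bookkeeping: one must check that every auxiliary point stays within the common radius of validity of \Lemma{approx1}, which is what forces the restriction to small $r$ and the corresponding choice of $\delta$.
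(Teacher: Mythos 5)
Your argument is correct: the paper offers no proof of this lemma beyond asserting that it follows from \Lemma{approx1}, and your derivation --- using the two halves of \Lemma{approx1} to show the distances from $Y^k$ to $\mathcal{S}$ and to $\mathcal{H}$ differ by $O(r^2)$ (hence their squares by $O(r^3)$), and then invoking the Pythagorean identity for the orthogonal projection onto the linear subspace $\mathcal{H}$ to convert that cubic gap into the $O(r^{3/2})$ in-plane bound on $\|Z^k-\tilde Z\|$ --- is precisely the argument the authors leave implicit, and every step checks out. The one loose end is that the lemma's hypothesis bounds $\|Z^k-\bar{X}\|$ rather than $r=\|Y^k-\bar{X}\|$, so your restriction to small $r$ needs the remark that the complementary case is trivial: since $X^k$ and $Z^k$ are both projections of $Y^k$ onto sets containing $\bar{X}$, one always has $\|Z^k-X^k\|\le 2\|Y^k-\bar{X}\|\le (2/\sqrt{r_0})\,\|Y^k-\bar{X}\|^{3/2}$ whenever $\|Y^k-\bar{X}\|\ge r_0$, which closes the gap at no cost.
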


%
%

\begin{lemma}
Gradients of $h(X)$ are Lipschitz with constant $L=\frac{1}{p}\max_i \gamma^i$, i.e. $\|\nabla h(X_1)-\nabla h(X_2)\| \leq L\|X_1-X_2\|$.
\end{lemma}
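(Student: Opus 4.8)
The plan is to compute $\nabla h$ explicitly and reduce the claim to a nonexpansiveness property of the exponential tilting map associated with a single row. Write $\Lambda(f):=\log(\langle\pi^0,e^{f}\rangle)$ for the log-partition function, so that $h(X)=\frac{1}{p}\sum_{i=1}^p\gamma^i(\langle\pi^i,X_i\rangle-\Lambda(X_i))$. Differentiating row by row, the linear terms $\langle\pi^i,X_i\rangle$ contribute the constants $\pi^i$, and the $i$th row of the gradient is
\[
\nabla h(X)_i=\frac{\gamma^i}{p}\bigl(\pi^i-\mu_{X_i}\bigr),\qquad \mu_{X_i}(z):=\frac{\pi^0(z)e^{X_i(z)}}{\langle\pi^0,e^{X_i}\rangle},
\]
where $\mu_{X_i}=\nabla\Lambda(X_i)$ is the distribution obtained by exponentially tilting $\pi^0$ along $X_i$. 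Since the constants $\pi^i$ cancel in a difference and the squared Frobenius norm splits over rows,
\[
\|\nabla h(X^1)-\nabla h(X^2)\|^2=\sum_{i=1}^p\Bigl(\frac{\gamma^i}{p}\Bigr)^2\|\mu_{X^1_i}-\mu_{X^2_i}\|^2,
\]
so it suffices to show that the map $f\mapsto\mu_f$ is Lipschitz with constant $1$.

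To establish this I would exploit that $f\mapsto\mu_f$ is the gradient of the smooth convex function $\Lambda$, so by the mean value inequality it is enough to bound the operator norm of $\nabla^2\Lambda(f)$ uniformly in $f$. A direct computation gives $\nabla^2\Lambda(f)=\Diag(\mu_f)-\mu_f\mu_f^{\transpose}$, the covariance matrix of the coordinate indicators under $\mu_f$; equivalently $v^{\transpose}\nabla^2\Lambda(f)\,v=\Var_{\mu_f}(v)$ for every $v\colon\zstate\to\Re$. This quadratic form is nonnegative and satisfies $\Var_{\mu_f}(v)\le\langle\mu_f,v^2\rangle\le\|v\|_\infty^2\le\|v\|^2$, which shows $0\preceq\nabla^2\Lambda(f)\preceq I$ and hence $\|\nabla^2\Lambda(f)\|_{\mathrm{op}}\le 1$ for all $f$. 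The mean value inequality then yields $\|\mu_{X^1_i}-\mu_{X^2_i}\|\le\|X^1_i-X^2_i\|$ for every row.

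Combining these estimates,
\[
\|\nabla h(X^1)-\nabla h(X^2)\|^2\le\sum_{i=1}^p\Bigl(\frac{\gamma^i}{p}\Bigr)^2\|X^1_i-X^2_i\|^2\le\Bigl(\frac{\max_i\gamma^i}{p}\Bigr)^2\|X^1-X^2\|^2,
\]
and taking square roots gives the claim with $L=\frac{1}{p}\max_i\gamma^i$. The only substantive step is the Hessian bound $\nabla^2\Lambda\preceq I$; I expect the main pitfall there to be the constant, since identifying the quadratic form as a variance and controlling it by $\|v\|_\infty^2\le\|v\|^2$ is what prevents a spurious dependence on $|\zstate|$ from entering $L$. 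Everything else is bookkeeping over the $p$ rows.
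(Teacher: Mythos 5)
Your proof is correct. The paper states this lemma without any proof, so there is nothing to compare against; your argument --- computing $\nabla h(X)_i=\frac{\gamma^i}{p}(\pi^i-\mu_{X_i})$ row by row, identifying $\nabla^2\Lambda(f)=\Diag(\mu_f)-\mu_f\mu_f^{\transpose}$ as a covariance matrix so that $v^{\transpose}\nabla^2\Lambda(f)v=\Var_{\mu_f}(v)\le\langle\mu_f,v^2\rangle\le\|v\|_\infty^2\le\|v\|^2$, and then summing the per-row Lipschitz bounds in the Frobenius norm --- is the natural one and establishes the stated constant $L=\frac{1}{p}\max_i\gamma^i$ (using, implicitly, that the weights $\gamma^i$ are nonnegative, which holds for the choices made in the paper). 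The one step worth flagging explicitly in a write-up is the vector-valued mean value inequality reducing the Lipschitz bound on $\nabla\Lambda$ to the uniform operator-norm bound $\nabla^2\Lambda\preceq I$; as you note, passing through $\|v\|_\infty$ is exactly what keeps $|\zstate|$ out of the constant.
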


\begin{lemma}
Suppose $\bar{X}$ is a local maximum in $\mathcal{S}$ and $\Rank(\bar{X})=d$. Then $\bar{X}$ is also a local maximum in $\mathcal{H}$.
\end{lemma}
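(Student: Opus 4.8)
The plan rests on two structural observations. First, the objective $h$ is \emph{concave}: each row of $X$ enters $h$ only through the term $\langle\pi^i,X_i\rangle-\log\langle\pi^0,e^{X_i}\rangle$, the sum of a linear functional and the (concave) negative log-partition function $-\log\langle\pi^0,e^{X_i}\rangle$, so with nonnegative weights $\gamma^i$ the map $X\mapsto h(X)$ is jointly concave. Second, $\mathcal{H}$ is a linear subspace containing $\bar{X}$ (take $W_1=I$, $W_2=0$). Consequently $h$ restricted to $\mathcal{H}$ is concave, and any stationary point of $h|_{\mathcal{H}}$ is automatically a global---hence local---maximum over $\mathcal{H}$. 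It therefore suffices to verify the first-order condition $\Proj_{\mathcal{H}}\nabla h(\bar{X})=0$.

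I would establish this stationarity by contradiction, using the quadratic tangency supplied by \Lemma{approx1}. Set $G=\Proj_{\mathcal{H}}\nabla h(\bar{X})$ and suppose $G\neq 0$. Because $\bar{X},G\in\mathcal{H}$, the points $Z(t)=\bar{X}+tG$ lie in $\mathcal{H}$ for all $t$, and a first-order expansion gives $h(Z(t))=h(\bar{X})+t\|G\|^2+o(t)$, where I have used $\langle\nabla h(\bar{X}),G\rangle=\|\Proj_{\mathcal{H}}\nabla h(\bar{X})\|^2=\|G\|^2$. By part~2 of \Lemma{approx1} there is a feasible $X(t)\in\mathcal{S}$ with $\|X(t)-Z(t)\|\le M\|Z(t)-\bar{X}\|^2=Mt^2\|G\|^2$. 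Since $\nabla h$ is bounded near $\bar{X}$ (indeed, by the preceding Lipschitz-gradient lemma, $\|\nabla h(Z)\|\le\|\nabla h(\bar{X})\|+L\|Z-\bar{X}\|$), this yields $h(X(t))\ge h(Z(t))-L'Mt^2\|G\|^2\ge h(\bar{X})+\tfrac12 t\|G\|^2-L'Mt^2\|G\|^2$ for a constant $L'$ valid on a neighborhood of $\bar{X}$. For all sufficiently small $t>0$ the linear term dominates, so $h(X(t))>h(\bar{X})$ while $X(t)\in\mathcal{S}$ and $X(t)\to\bar{X}$, contradicting the local maximality of $\bar{X}$ on $\mathcal{S}$. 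Hence $G=0$, and the first paragraph then completes the argument.

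The step I expect to be the main obstacle is exactly this balance: the gain along $\mathcal{H}$ is of order $t$, whereas the deviation incurred by projecting $Z(t)$ back onto the feasible set $\mathcal{S}$ is of order $t^2$; it is the \emph{second-order} tangency in \Lemma{approx1} (rather than a mere first-order approximation) that guarantees the feasible perturbation retains the strict increase. I would also emphasize that concavity is not a convenience but a necessity here: local maximality on $\mathcal{S}$ only yields stationarity on the tangent subspace $\mathcal{H}$, and without concavity the curvature of $\mathcal{S}$ at $\bar{X}$ (its second fundamental form) could in principle make $\bar{X}$ a saddle of $h|_{\mathcal{H}}$; concavity removes this possibility by rendering stationarity sufficient for a maximum.
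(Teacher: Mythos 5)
Your proof is correct. The paper states this lemma without proof, so there is nothing to compare against directly; on its own merits your two-step structure---(i) local maximality on $\mathcal{S}$ forces $\Proj_{\mathcal{H}}\nabla h(\bar{X})=0$, and (ii) concavity of $h$ on the linear subspace $\mathcal{H}$ upgrades stationarity to global, hence local, maximality over $\mathcal{H}$---is sound, and your closing observation that concavity is genuinely needed (otherwise the curvature of $\mathcal{S}$ at $\bar{X}$, contracted against the normal component of $\nabla h(\bar{X})$, could make $\bar{X}$ a saddle of $h|_{\mathcal{H}}$) is exactly right. One simplification worth noting: step (i) does not require \Lemma{approx1} or the $t$-versus-$t^2$ balancing argument. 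For a generator $V=\bar{X}W_1$ the whole line $\bar{X}+tV=\bar{X}(I+tW_1)$ has rank at most $\Rank(\bar{X})=d$ and therefore lies in $\mathcal{S}$ for every $t$; likewise $(I+tW_2)\bar{X}\in\mathcal{S}$ for $V=W_2\bar{X}$. Local maximality of $h$ on $\mathcal{S}$ along these feasible lines immediately gives $\langle\nabla h(\bar{X}),\bar{X}W_1\rangle=\langle\nabla h(\bar{X}),W_2\bar{X}\rangle=0$ for all $W_1,W_2$, and linearity of the inner product then yields $\nabla h(\bar{X})\perp\mathcal{H}$ directly, which is your condition $G=0$. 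Your route through the quadratic tangency of \Lemma{approx1} is valid but imports a curvature estimate where none is needed; the concavity step, by contrast, is where the real content of the lemma lies, and you have handled it correctly (including flagging the implicit assumption $\gamma^i\ge 0$).
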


\begin{proof}[Outline of Proof of \Proposition{localconverge}]
Using standard results form optimization theory, we can prove that for any small enough $\delta>0$, if $\|X^k-\bar{X}\| \leq \delta$ and $\alpha < \frac{2}{L}$, then $\|Z^{k+1}-\bar{X}\|\leq q \|X^k-\bar{X}\|$ for some $q<1$ where $q$ could depend on $\delta$, and $\|Y^{k+1}-\bar{X}\|\leq \|X^k-\bar{X}\|$. Thus, we can choose a $\delta$ small enough so that
$M\delta^\frac{1}{2}\leq \frac{1-q}{2}$. With this choice, we have
\begin{eqnarray}
\|X^{k+1}-\bar{X}\| &\leq& \|Z^{k+1}-\bar{X}\|+\|Z^{k+1}-X^{k+1}\|\nonumber\\ 
&\leq& \|Z^{k+1}-\bar{X}\|+M\delta^{\half}\|{Y}^{k+1}-\bar{X}\|\nonumber\\ 
&\leq& (q+\half(1-q)) \|X^k-\bar{X}\|.\nonumber
\end{eqnarray}
\Proposition{localconverge} then follows from induction. 
\end{proof}

\section{Simulations}
We consider probability distributions in an exponential family of the form $\pi^i(z)=\exp\{\sum_{k=1}^{q} \theta_{i,k} \psi_i(z)+\sum_{i=k}^{q'}\theta'_{i,k} \psi'_i(z)\}$. We first randomly generate $\{\psi_i\}$ and $\{\psi'_i\}$ to fix the model. A distribution is obtained by randomly generating $\{\theta_{i,k}\}$ and $\{\theta'_{i,k}\}$ according to uniform distributions on $[-1,1]$ and $[-0.1,0.1]$, respectively. In application of the algorithm presented in \Section{subsecalgorithm}, the bases $\{\psi_i\}$ and $\{\psi'_i\}$ are not given. This model can be interpreted as a perturbation to $q$-dimensional exponential family with basis $\{\psi_i\}$.  

In the experiment we have two phases: In the feature extraction (training) phase, we randomly generate $p+1$ distributions, taken as $\pi^0, \ldots, \pi^p$. We then use our techniques in \eqref {e:optproori} with the proposed algorithm to find the function class $\clF$. The weights $\gamma_i$ are chosen as $\gamma^i=1/D(\pi^i\|\pi^0)$ so 
that the objective value is no larger than $1$. In the testing phase, we randomly generate $t$ distributions, denoted by $\mu^1,\ldots, \mu^t$. We then compute the average of $\DMM(\mu^i\|\pi^0)/D(\mu^i\|\pi^0)$. 

For the experimental results shown in \Figure{dmmtraining}, the parameters are chosen as $q=8$, $q'=5$, and $t=500$. Shown in the figure is an average of $\DMM(\pi^i\|\pi^0)/D(\pi^i\|\pi^0)$ (for training) as well as $\DMM(\mu^i\|\pi^0)/D(\mu^i\|\pi^0)$ (for testing) for two cases: $p=50$ and $p=500$.
We observe the following:
\begin{compactenum}
\item The objective value increases gracefully as $d$ increases. For $d\geq7$, the values are close to 1.
\item The curve for training and testing are closer when $p$ is larger, which is expected. 

\end{compactenum}
\begin{figure}
\vspace{-.4cm}
\centering
\includegraphics[width=0.4\textwidth]{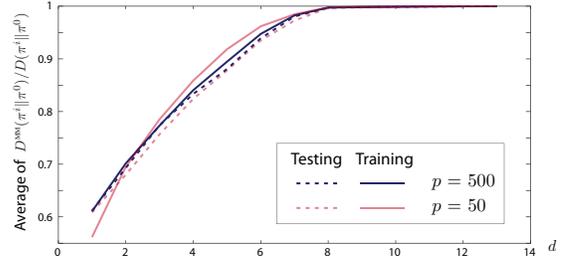}
\vspace{-.1cm}
\caption{Dashed curve: average of $\DMM(\mu^i\|\pi^0)/D(\mu^i\|\pi^0)$. Solid curve: average of $\DMM(\pi^i\|\pi^0)/D(\pi^i\|\pi^0)$}
\label{f:dmmtraining}
\vspace{-.5cm}
\end{figure}
\section{Conclusions}
The main contribution of this paper is a framework to address the feature extraction problem for universal hypothesis testing, cast as a rank-constrained optimization problem. This is motivated by results on the number of easily distinguishable distributions, which demonstrates that it is possible to use a small number of features to design effective universal tests for a large number of possible distributions. We propose a gradient-based algorithm to solve the rank-constrained optimization problem, and the algorithm is proved to converge locally. Directions considered in current research include: applying the nuclear-norm heuristic \cite{fazhinboy01p4734} to solve the optimization problem \eqref{e:optproori}, applying this framework to real-world data, and extension of this framework to incorporate other form of partial information. 

\appendix
 
\subsection{Proof of the lower bound in \Proposition{lowerupperexp}}
We give a constructive proof of the lower bound \eqref{e:lowerexp} by combining ideas in \Lemma{nonsingularcons} and \ref{t:singularconstruction}.

\begin{lemma}\label{t:nonsingularcons}
$\bar{N}(2)\geq |\zstate|$.
\end{lemma}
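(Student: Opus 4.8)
The plan is to exhibit one genuinely two-dimensional exponential family that contains $|\zstate|$ distributions which are simultaneously $\epsilon$-extremal and pairwise $\epsilon$-distinguishable for every small $\epsilon$. Since the definition of $N(\EXPF)$ permits the witnessing set $\mathcal{A}$ to depend on $\epsilon$, producing such a family for each small $\epsilon$ gives $\bar{N}(2) \geq N(\EXPF) \geq |\zstate|$, which is the claim.

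First I would choose the two basis functions $\psi_1,\psi_2$ so that the $|\zstate|$ feature points $\{(\psi_1(z),\psi_2(z)): z\in\zstate\}\subset\Re^2$ lie in strictly convex position, for instance by placing them on a circle, $\psi_1(z_j)=\cos(2\pi j/|\zstate|)$ and $\psi_2(z_j)=\sin(2\pi j/|\zstate|)$. The point of this choice is that every feature point is then an \emph{exposed} vertex of the convex hull: for each $z$ there is a direction $r^{(z)}=(r^{(z)}_1,r^{(z)}_2)\in\Re^2$ for which $z$ uniquely maximizes $g_z(w):=r^{(z)}_1\psi_1(w)+r^{(z)}_2\psi_2(w)$, with a strictly positive gap $\Delta_z:=g_z(z)-\max_{w\neq z}g_z(w)>0$. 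The circle placement also forces $\{\mathbf{1},\psi_1,\psi_2\}$ to be linearly independent, so that $\clF=\Span\{\psi_1,\psi_2\}$ is minimal and the family $\EXPF=\EXPF(\nu,\clF)$, with $\nu$ taken uniform on $\zstate$, is genuinely two-dimensional.

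Next, for a scale $T>0$ I would form the tilted distribution $\pi^z_T(w)\propto \nu(w)e^{Tg_z(w)}$, which lies in $\EXPF$ by construction. A one-line exponential-tilting estimate,
\[
\pi^z_T(z)=\frac{1}{1+\sum_{w\neq z}e^{-T(g_z(z)-g_z(w))}}\geq \frac{1}{1+|\zstate|\,e^{-T\Delta_z}},
\]
shows $\pi^z_T(z)\to 1$ as $T\to\infty$. Fixing $\epsilon_0=1/3$ and any $0<\epsilon<\epsilon_0$, I choose $T$ large enough (depending on $\epsilon$ and $\min_z\Delta_z$) that $\pi^z_T(z)\geq 1-\epsilon$ for all $z$ at once. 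Then $z$ is the unique mode of $\pi^z_T$, so $F^\epsilon(\pi^z_T)=\{z\}$ and $\pi^z_T(F^\epsilon(\pi^z_T))=\pi^z_T(z)\geq 1-\epsilon$, giving $\epsilon$-extremality; and for $z\neq z'$ the distinctness $\{z\}=F^\epsilon(\pi^z_T)\neq F^\epsilon(\pi^{z'}_T)=\{z'\}$ yields $\epsilon$-distinguishability. Hence $\mathcal{A}=\{\pi^z_T:z\in\zstate\}$ witnesses $N(\EXPF)\geq|\zstate|$.

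The individual steps are routine; the only delicate point is the geometric one, namely that all $|\zstate|$ feature points can be made exposed vertices \emph{simultaneously}, so that each $z$ admits its own concentrating direction. This is precisely where the strictly-convex-position choice is essential and where a careless choice of features would fail. The remainder is bookkeeping: making the concentration uniform in $z$ by taking a common large $T$, and checking the $\epsilon<1/3$ threshold that separates the mode from the remaining mass.
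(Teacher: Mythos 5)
Your proof is correct and follows essentially the same strategy as the paper's: choose two basis functions whose feature points $\{(\psi_1(z),\psi_2(z))\}$ lie in strictly convex position so that each $z\in\zstate$ is the unique maximizer of some $f\in\clF$, then let the tilting parameter grow to concentrate each distribution on its own point, yielding $\epsilon$-extremality and pairwise $\epsilon$-distinguishability. The paper uses $\psi_1$ linear and $\psi_2$ a geometric sum with tilts $u^k=\psi_1+2^{k-0.5}\psi_2$ where you use points on a circle, but this is a cosmetic difference; your writeup is in fact more explicit about the concentration estimate and the $\epsilon<1/3$ threshold.
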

\begin{proof}
We pick the following two basis functions $\psi_1,\psi_2$:
\begin{equation}
\begin{aligned}
 \psi_1&=[|\zstate|-1,|\zstate|-2, \ldots, 0], \\ \quad \textrm{and  } \psi_2&=[1, 1.5, \sum_{j=0}^{2} 2^{-j}, \ldots, \sum_{j=0}^{|\zstate|-1} 2^{-j}].
\end{aligned}\label{e:nonsingularfunction}
\end{equation}

\vspace{-0.08in}
For $1 \leq k \leq |\zstate|$, define $u^k$ as $u^k=\psi_1+2^{k-0.5}\psi_2$. Assuming without loss of generality that $\zstate=\{1, \ldots, |\zstate|\}$, we have $\argmax_z u^k(z)=k$ .

Now, for any $\beta >0$, $1 \leq k \leq |\zstate|$, define the distribution
\[
\pi^{k,\beta}(z)=C\exp\{\beta u^k(z)\}.
\] 
where $C$ is a normalizing constant. 
Since there are only finite choices of $k$, for any small enough $\epsilon$, there exists $\beta_0$ such that for $\beta \geq \beta_0$, $\{\pi^{k,\beta}, 1\leq k \leq |\zstate|\}$ are $\epsilon$-extremal and any two distributions in $\{\pi^{k,\beta}, 1\leq k \leq |\zstate|\}$ are $\epsilon$-distinguishable. 
\end{proof}

\begin{lemma}\label{t:singularconstruction}
$\bar{N}(d)\geq {d \choose \lfloor d/2 \rfloor}$
\end{lemma}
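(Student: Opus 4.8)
The plan is to exhibit a single $d$-dimensional exponential family $\EXPF(\nu,\clF)$ whose membership contains, for every arbitrarily small $\epsy$, a collection of $\binom{d}{\lfloor d/2\rfloor}$ distributions that are simultaneously $\epsy$-extremal and pairwise $\epsy$-distinguishable; this gives $N(\EXPF)\ge\binom{d}{\lfloor d/2\rfloor}$ and hence the claimed bound on $\bar N(d)$. The key observation is that if the basis functions are chosen to be indicators of $d$ distinct points, then \emph{every} subset of those points can be realized as the high-probability set $F^{\epsy}$ of some member of the family. Pairwise distinguishability then forces us to use a set system that is an antichain, and it suffices to use the antichain consisting of all $\lfloor d/2\rfloor$-subsets, of which there are exactly $\binom{d}{\lfloor d/2\rfloor}$.

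Concretely, I would fix $d+1$ distinct points $z_1,\dots,z_{d+1}\in\zstate$ (possible since a $d$-dimensional minimal family already requires $|\zstate|\ge d+1$), take $\nu$ to be the uniform distribution on $\zstate$, and set $\psi_i=\ind_{\{z_i\}}$ for $1\le i\le d$, so that $\clF=\Span\{\psi_1,\dots,\psi_d\}$. Minimality holds because $z_{d+1}$ lies outside the support of every $\psi_i$, so $\mathbf 1$ is independent of $\{\psi_i\}$. For each $S\subseteq\{z_1,\dots,z_d\}$ and each $\beta>0$, the function $\beta\ind_S=\beta\sum_{z_i\in S}\psi_i$ belongs to $\clF$, and the corresponding member of the family is
\[
\pi_{S,\beta}(z)=\frac{e^{\beta\ind_S(z)}}{|S|\,e^{\beta}+(|\zstate|-|S|)},
\]
which equals $e^{\beta}/Z_\beta$ for $z\in S$ and $1/Z_\beta$ otherwise, with $Z_\beta=|S|e^{\beta}+(|\zstate|-|S|)$.

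I would then verify two facts, both by letting $\beta\to\infty$. First, the peak set stabilizes: the gap between the maximal value and the off-$S$ value is $(e^{\beta}-1)/Z_\beta$, which is increasing in $\beta$ and tends to $1/|S|$, so for any $\epsy<1/\lfloor d/2\rfloor$ and all sufficiently large $\beta$ one has $F^{\epsy}(\pi_{S,\beta})=S$ exactly. Second, $\pi_{S,\beta}(S)=|S|e^{\beta}/Z_\beta\to 1$, so for large $\beta$ the distribution is $\epsy$-extremal. Restricting to the $\binom{d}{\lfloor d/2\rfloor}$ subsets $S$ of size $m=\lfloor d/2\rfloor$, any two distinct such $S,S'$ satisfy $|S|=|S'|$, whence $S\setminus S'\neq\emptyset$ and $S'\setminus S\neq\emptyset$; since $F^{\epsy}(\pi_{S,\beta})=S$ and $F^{\epsy}(\pi_{S',\beta})=S'$, the pair is $\epsy$-distinguishable. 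Because there are only finitely many such $S$, a single concentration level $\beta=\beta(\epsy)$, chosen large enough and growing as $\epsy\downarrow 0$, makes all of them simultaneously $\epsy$-extremal with peak set exactly $S$; this verifies the definition of $N(\EXPF)$ with threshold $\epsy_0=1/m$.

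The only real content is the construction itself — recognizing that an indicator basis makes every subset of the chosen points an attainable peak set, and that equal-cardinality sets are automatically incomparable and therefore distinguishable. Once this is in place, the remaining estimates are the routine softmax-concentration computations sketched above, so I do not anticipate a genuine obstacle; the one point demanding a little care is keeping the quantifiers straight, namely producing, for each $\epsy<\epsy_0$, a single $\beta$ that works uniformly across all $\binom{d}{\lfloor d/2\rfloor}$ distributions at once.
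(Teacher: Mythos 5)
Your construction is exactly the paper's: the paper's entire proof is ``take $\psi_k(z)=\ind\{z=k\}$ for $1\le k\le d$,'' and you have correctly filled in the implicit details (the antichain of $\lfloor d/2\rfloor$-subsets as peak sets, the $\beta\to\infty$ concentration giving $\epsy$-extremality and peak set exactly $S$, and the uniform choice of $\beta$ over the finitely many sets). The argument is correct and takes the same route.
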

\begin{proof}
Take $\psi_k(z)=\ind\{z=k\}$ for $1\leq k \leq d$. 
\end{proof}

\begin{proof}[Outline of proof of the lower bound]
The basis functions used in the construction are the Kronecker products of basis functions used for \Lemma{singularconstruction} and \Lemma{nonsingularcons}. 

Let $J=\lfloor  {|\zstate|}/{\lfloor \half d \rfloor}\rfloor$. Let $\bar{\psi}_1, \bar{\psi_2}$ denote the basis function defined in \eqref{e:nonsingularfunction} with $|\zstate|$ replaced by $J$.
The basis functions used for the lower bound are given by 
\begin{equation}
 \begin{aligned}
  \psi_{k}(i+jJ)&=\ind\{j=k-1\}\bar{\psi}_1(i),\quad \textrm{for $1\leq k \leq \lfloor \half d\rfloor$},\\
  \psi_{k+\lfloor d/2\rfloor}(i+jJ)&=\ind\{j=k-1\}\bar{\psi}_2(i),\quad \textrm{for $1\leq k \leq \lfloor \half d\rfloor$}.
 \end{aligned}\nonumber
\end{equation}
\end{proof}
\vspace{-0.2in}
\subsection{Proof of the upper bound in \Proposition{lowerupperexp}}
The main idea of the proof of \eqref{e:upperexp}  is to relate this bound  to VC dimension.  We first obtain an elementary upper bound. 
\begin{lemma}\label{t:NpandhatNp}
$N(\EXPF)\leq \hat{N}(\EXPF)$, where
\[
 \hat{N}(\EXPF)=|\{F^\epsilon(\sum_l r_l \psi_l): r \in \Re^{d}, \epsilon>0\}|.
\]
\end{lemma}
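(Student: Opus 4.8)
The plan is to exploit the fact that the pairwise $\epsilon$-distinguishability built into the definition of $N(\EXPF)$ forces the top-sets $F^\epsilon(\cdot)$ of the competing distributions to be \emph{distinct}, and then to identify each such top-set of a \emph{distribution} with a top-set $F^{\delta}(\cdot)$ of a \emph{function} in the underlying linear class. This turns the count of distinguishable distributions into a count of realizable level sets, which is exactly $\hat N(\EXPF)$.

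First I would unwind the definition of $N=N(\EXPF)$: fix any $\epsilon$ below the threshold $\epsilon_0$, and let $\{\mu_1,\dots,\mu_N\}\subseteq\EXPF$ be an $\epsilon$-extremal, pairwise $\epsilon$-distinguishable family of size $N$. For $i\neq j$, distinguishability gives $F^\epsilon(\mu_i)\setminus F^\epsilon(\mu_j)\neq\emptyset$, so in particular $F^\epsilon(\mu_i)\neq F^\epsilon(\mu_j)$. Hence $\mu\mapsto F^\epsilon(\mu)$ is injective on the family, and $N\le |\{F^\epsilon(\mu):\mu\in\EXPF\}|$. Note that the extremality hypothesis plays no role in this direction; only distinguishability is used.

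Next I would convert a top-set of a distribution into a top-set of its exponent. For $\mu\in\EXPF$ with exponent $f=\sum_l r_l\psi_l$ and $M=\max_z\mu(z)$, the monotonicity of $\log$ gives, for $\epsilon<M$ (which holds as $\epsilon\to 0$, since $M\ge 1/|\zstate|$), the exact identity $F^\epsilon(\mu)=F^{\delta}(\log\mu)$ with the single scalar $\delta:=\log\frac{M}{M-\epsilon}>0$: indeed $\mu(z)\ge M-\epsilon$ iff $\log\mu(z)\ge\max_w\log\mu(w)-\delta$. Because $\log\mu=\sum_l r_l\psi_l+\log\nu-\log\langle\nu,e^f\rangle$ and $F^{\delta}$ is invariant under adding a constant, every counted top-set equals $F^{\delta}\bigl(\sum_l r_l\psi_l+\log\nu\bigr)$ for some $r\in\Re^d$ and $\delta>0$. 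Since $\hat N(\EXPF)$ ranges over all $r$ and all $\epsilon>0$, the differing thresholds $\delta$ for different $\mu$ are harmless, and it remains only to match this collection against the one defining $\hat N$.

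The main obstacle is precisely this final matching: the log-density carries the fixed reference term $\log\nu$, so the realized level sets are top-sets of the \emph{affine} family $\log\nu+\{\sum_l r_l\psi_l\}$, whereas $\hat N$ as written uses the \emph{linear} family $\{\sum_l r_l\psi_l\}$. I would handle this by observing that $\log\nu$ merely translates the linear class without changing its dimension, and that $\hat N$ is in turn controlled through the VC dimension of halfspaces in the subsequent lemmas, a dimension-dependent quantity to which the fixed offset is immaterial; when $\nu$ is the counting (uniform) measure, as in the lower-bound construction, $\log\nu$ is constant and the identification with $\hat N$ is exact. Combining the injectivity bound with this identification yields $N(\EXPF)\le\hat N(\EXPF)$.
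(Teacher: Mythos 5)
Your argument is correct and follows essentially the same route as the paper's (one-sentence) proof: pairwise $\epsy$-distinguishability forces the top-sets $F^\epsy(\mu_i)$ to be pairwise distinct, and each is then rewritten as $F^{\delta}(\log\mu_i)$ for a suitable $\delta>0$, giving an injection into the collection counted by $\hat N(\EXPF)$; your version is in fact more explicit than the paper's about the threshold change $\delta=\log\frac{M}{M-\epsy}$. The $\log\nu$ offset you flag is a genuine imprecision, but it is equally present in the paper's proof (which silently treats $\log\pi^i$ as lying in the linear span of the $\psi_l$), and your remark that the fixed translate can be absorbed without affecting the downstream VC-dimension bound resolves it at least as well as the paper does.
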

\vspace{-0.1in}
\begin{proof}
By definition if a subset $A$ of $\EXPF$ is $\epsilon$-extremal, and any two distributions in $A$ are $\epsilon$-distinguishable, then for any two distributions $\pi^i, \pi^j \in A$, there exists $\epsilon_1, \epsilon_2>0$ such that $F^{\epsilon_1}(\log(\pi^1))\neq F^{\epsilon_2}(\log(\pi^2))$. 
\end{proof}

Let $\setofhalfspace$ denote the set of all the half space in $\Re^d$, and let $\VCdim(\setofhalfspace)$ denote the VC dimension of $\setofhalfspace$.   It is known that  $\VCdim(\setofhalfspace)=d+1$  \cite[Corollary of Theorem 1]{bur98p121}.

For any finite subset $\setB$ of $\Re^d$, define $\tau(\setB)=|\{h \cap \setB: h \in \setofhalfspace\}|$. In other words, $\tau(\setB)$ is the number of subsets one can obtain by intersecting $\setB$ with half-spaces from $\setofhalfspace$. A bound on $\tau(\setB)$ is given by Sauer's lemma:
\begin{lemma}[Sauer's Lemma]
\label{t:Sauer}
The following bound holds whenever $|\setB| \geq \VCdim(\setofhalfspace)$: 
\[
\tau(\setB)  \leq (\frac{e |\setB|}{\VCdim(\setofhalfspace)})^{\VCdim(\setofhalfspace)}.
\]
\end{lemma}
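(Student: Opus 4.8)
The statement is the classical Sauer--Shelah lemma, specialized to the family $\setofhalfspace$ of half-spaces in $\Re^d$, for which $\VCdim(\setofhalfspace)=d+1$. The plan is to decouple it into a purely combinatorial bound and a routine numerical estimate. Writing $v\eqdef\VCdim(\setofhalfspace)$ and $m\eqdef|\setB|$, the essential step is the exact growth-function bound
\[
\tau(\setB)\le \sum_{i=0}^{v}\binom{m}{i},
\]
which holds for \emph{any} set system of VC dimension at most $v$ on a ground set of size $m$; the stated exponential form then follows from an elementary manipulation of this sum under the hypothesis $m\ge v$.

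For the combinatorial bound I would argue by induction on $m$, uniformly over $v$. Regard the trace family $\clF=\{h\cap\setB:h\in\setofhalfspace\}$ as a collection of subsets of $\setB$, so that by definition $\tau(\setB)=|\clF|$. Fix an element $x\in\setB$ and split $\clF$ according to $x$: let $\clF_1$ be the image of $\clF$ under deletion of $x$ (the projection onto $\setB\setminus\{x\}$), and let $\clF_2$ consist of those $S\subseteq\setB\setminus\{x\}$ for which both $S$ and $S\cup\{x\}$ lie in $\clF$. Counting preimages of each projected set (one or two) yields $|\clF|=|\clF_1|+|\clF_2|$. The key observation is that $\clF_1$ still has VC dimension at most $v$ on a ground set of size $m-1$, while $\clF_2$ has VC dimension at most $v-1$: any set shattered by $\clF_2$ yields, together with $x$, a set shattered by $\clF$. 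Applying the induction hypothesis to each piece and combining via Pascal's identity $\binom{m-1}{i}+\binom{m-1}{i-1}=\binom{m}{i}$ reproduces exactly $\sum_{i=0}^{v}\binom{m}{i}$.

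For the numerical step, assuming $m\ge v$ I would insert the factors $(m/v)^{v-i}\ge 1$ termwise and then complete the binomial sum:
\[
\sum_{i=0}^{v}\binom{m}{i}\le \Bigl(\tfrac{m}{v}\Bigr)^{v}\sum_{i=0}^{m}\binom{m}{i}\Bigl(\tfrac{v}{m}\Bigr)^{i}=\Bigl(\tfrac{m}{v}\Bigr)^{v}\Bigl(1+\tfrac{v}{m}\Bigr)^{m}\le \Bigl(\tfrac{em}{v}\Bigr)^{v},
\]
using $(1+v/m)^{m}\le e^{v}$. Substituting $v=\VCdim(\setofhalfspace)$ and $m=|\setB|$ gives the claimed bound.

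The main obstacle is the bookkeeping in the inductive step rather than any deep idea: one must verify carefully that $\clF_2$ genuinely loses one unit of VC dimension (the shattering-lifting argument) and that the additive split $|\clF|=|\clF_1|+|\clF_2|$ is correct, since $\clF_1$ collapses each pair $\{S,S\cup\{x\}\}$ to a single set while $\clF_2$ counts precisely those collapsed pairs. Everything else --- the base cases $m=0$ and $v=0$, and the final numerical inequality --- is routine.
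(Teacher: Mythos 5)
Your proof is correct: it is the standard Sauer--Shelah argument, namely the induction on $|\setB|$ via the split $|\clF|=|\clF_1|+|\clF_2|$ (with the shattering-lifting step showing $\clF_2$ has VC dimension at most $v-1$) to get $\tau(\setB)\le\sum_{i=0}^{v}\binom{|\setB|}{i}$, followed by the routine binomial-tail estimate giving $(e|\setB|/v)^{v}$ when $|\setB|\ge v$. The paper itself offers no proof of this lemma --- it is invoked as a classical result --- so there is nothing to compare against; your write-up simply supplies the standard proof, and it is sound (note only that the induction hypothesis must indeed be phrased for arbitrary set systems of bounded VC dimension, as you do, since $\clF_1$ and $\clF_2$ need not be traces of half-spaces).
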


Consider any $d$-dimensional exponential family $\EXPF$ with basis $\{\psi_l, 1\leq l \leq d\}$. 
Define a set of function $\{y^i \}\subset  \Re^d$ via,  
\[
y^i_j=\psi_j(i),\qquad 1\leq i \leq |Z|,\  1\le j\le d. 
\] 
In other words, if we stack $\{\psi_l\}$ into a matrix so that each $\psi_l$ is a row, then $\{y^i\}$ are the columns of this matrix. Let $B(\EXPF)=\{y^i, 1\leq i \leq |Z|\}$. The following lemma connects $\tau(B(\EXPF))$ to $\hat{N}(\EXPF)$. 

\begin{lemma}\label{t:hatNandhalf-space}$\hat{N}(\EXPF) \leq \tau(B(\EXPF)).$
\end{lemma}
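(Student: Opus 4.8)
The plan is to construct an explicit injection from the collection of distinct sets counted by $\hat{N}(\EXPF)$, namely $\{F^\epsilon(\sum_l r_l \psi_l): r \in \Re^d, \epsilon>0\}$, into the collection of distinct half-space traces $\{h \cap B(\EXPF): h \in \setofhalfspace\}$ counted by $\tau(B(\EXPF))$. The existence of such an injection immediately yields $\hat{N}(\EXPF) \leq \tau(B(\EXPF))$.

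First I would translate the set-valued function $F^\epsilon$ into the geometry of $\Re^d$. For $f=\sum_l r_l \psi_l$ and any symbol $z$, the definition of $y^z$ gives $f(z)=\sum_l r_l \psi_l(z)=\langle r, y^z\rangle$. Setting $M=\max_{z'}\langle r, y^{z'}\rangle$, the set $F^\epsilon(f)=\{z:\langle r, y^z\rangle \geq M-\epsilon\}$ consists exactly of those $z$ whose feature vector $y^z$ lies in the closed half-space $h_{r,\epsilon}\eqdef\{x\in\Re^d:\langle r,x\rangle\geq M-\epsilon\}$. Hence $\{y^z: z\in F^\epsilon(f)\}=h_{r,\epsilon}\cap B(\EXPF)$, so the map $\Phi$ sending $F^\epsilon(f)$ to $\{y^z: z\in F^\epsilon(f)\}$ takes values in the family of half-space traces (the degenerate case $r=0$ produces $F^\epsilon(f)=\zstate$, whose image is the full trace $B(\EXPF)$, and is harmless).

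Next I would prove that $\Phi$ is injective, which is the crux of the argument. The key structural fact is that the sets $F^\epsilon(f)$ respect coincidences among the points $\{y^z\}$: if $y^{z_1}=y^{z_2}$ then $\langle r,y^{z_1}\rangle=\langle r,y^{z_2}\rangle$ for every $r$, so $z_1$ and $z_2$ belong to every $F^\epsilon(f)$ simultaneously. Using this, suppose $\Phi(S_1)=\Phi(S_2)$ with $S_i=F^{\epsilon_i}(f_i)$, and take $z\in S_1$; then $y^z\in\Phi(S_1)=\Phi(S_2)$, so $y^z=y^{z''}$ for some $z''\in S_2$, and the coincidence property forces $z\in S_2$. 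Thus $S_1\subseteq S_2$, and by symmetry $S_1=S_2$.

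The main obstacle is exactly this injectivity step, because the map $z\mapsto y^z$ need not be one-to-one: distinct symbols of $\zstate$ may share the same feature vector, so a priori two different $F$-sets could collapse to the same trace $h\cap B(\EXPF)$. The coincidence property is what rules this out, and it must be invoked explicitly rather than assuming the $y^z$ are distinct; once it is in hand, injectivity and hence the claimed bound follow directly.
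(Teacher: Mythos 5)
Your proof is correct and takes essentially the same route as the paper: both identify $F^\epsilon(\sum_l r_l\psi_l)$ with the trace on $B(\EXPF)$ of the half-space $\{x:\langle r,x\rangle\geq M-\epsilon\}$. The only difference is that you spell out the injectivity of this correspondence (via the observation that $F^\epsilon(f)$ is the \emph{full} preimage of the half-space, so coincident feature vectors cannot split the sets), a point the paper's proof leaves implicit.
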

\begin{proof}
For given $r \in \Re^d$ and $\epsilon>0$, denote $I=F^\epsilon(\sum_l r_l \psi_l)$. 
By the definition of $F^\epsilon$ we have $I=\{i: r^\transpose y^i \geq \sup_z(\sum_l r_l \psi_l(z))-\epsilon\}$. Therefore, there exists $b$ such that $r^\transpose y^i \geq b$ for all $i \in I$, 
and $r^\transpose y^i < b$ for all $i \notin I$. That is, $I$ is the subset of $\{y^i\}$ that lies in the half space $\{y: r^\transpose y\geq b\}$. Thus, $\{y^i: i \in I\} \in \{h \cap B(\EXPF): h \in H\}$. Since this holds for any element in $\{F^\epsilon(\sum_l r_l \psi_l): r \in \Re^{d}, \epsilon>0\}$, we obtain the result.
\end{proof}



\begin{proof}[Proof of the upper bound]
We obtain \eqref{e:upperexp} on combining \Lemma{NpandhatNp}, \Lemma{Sauer} and \Lemma{hatNandhalf-space}, 
together with the identity $\VCdim(\setofhalfspace)=d+1$.
\end{proof}


\vspace{0.05in}
\paragraph*{Acknowledgment} 
This research was partially supported by AFOSR under grant AFOSR FA9550-09-1-0190 and NSF under grants NSF CCF 07-29031 and
 NSF CCF 08-30776.   Any opinions, findings, and conclusions or recommendations expressed in this material are those of the authors and do not
necessarily reflect the views of the AFOSR or NSF.

\bibliographystyle{IEEEtran}
\bibliography{IEEEabrv,DH_String,Dayu_bibtex}
\end{document}